
\documentclass[sigconf]{acmart}

\copyrightyear{2022}
\acmYear{2022}
\setcopyright{acmlicensed}
\acmConference[SIGMOD '22] {Proceedings of the 2022 International Conference on Management of Data}{June 12--17, 2022}{Philadelphia, PA, USA.}
\acmBooktitle{Proceedings of the 2022 International Conference on Management of Data (SIGMOD '22), June 12--17, 2022, Philadelphia, PA, USA}
\acmPrice{15.00}
\acmISBN{978-1-4503-9249-5/22/06}
\acmDOI{10.1145/XXXXXX.XXXXXX}

\settopmatter{printacmref=true}

\usepackage{pgfplots}
\usepackage{pgfplotstable}
\pgfplotsset{compat=1.8}
\usepackage{subcaption}
\usepackage{xspace} 
\usepackage{xcolor}
\usepackage{algorithm}
\usepackage{algorithmicx}
\usepackage{float}
\usepackage{siunitx}
\usepackage{graphicx}
\usepackage{thm-restate}
\usepackage{blkarray}
\usepackage{balance}  
\usepackage{booktabs} 

\usepackage{marginnote} 
\usepackage{url}
\usepackage{enumitem}
\graphicspath{{./fig/}}
\usepackage{mathtools}
\usepackage[noend]{algpseudocode}
\newtheorem{theorem}{Theorem}
\newtheorem{lemma}{Lemma}
\newtheorem{corollary}{Corollary}
\newtheorem{definition}{Definition}

\newtheorem{observation}{Observation}
\usepackage{comment}
\usepackage{color}
\usepackage{cleveref}
\usepackage{tikz}
\usepackage{todonotes} 
\usetikzlibrary{fadings}
\usepackage[frozencache=true,cachedir=.]{minted}


\usepackage{dirtytalk}


\newcommand{\punt}[1]{}


%

\newtheorem{problem}{Problem}

\makeatletter
\def\@copyrightspace{\relax}
\makeatother

\newcommand{\defn}[1]       {{\textit{\textbf{\boldmath #1}}}}

\newcommand{\polylog}{\mathrm{polylog}}
\newcommand{\sort}{\mathrm{sort}}


\newcommand{\stream}{S}

\newcommand{\graph}{\mathcal{G}}
\newcommand{\nodes}{\mathcal{V}}
\newcommand{\edges}{\mathcal{E}}
\newcommand{\nodesize}{V}
\newcommand{\edgesize}{E}
\newcommand{\graphstream}{S}
\newcommand{\streamlength}{N}
\newcommand{\sketch}{\mathcal{S}}
\newcommand{\blocksize}{B}
\newcommand{\memsize}{M}
\newcommand{\disksize}{D}
\newcommand{\streamelement}{s}
\newcommand{\indexsubset}{\mathcal{B}}
\newcommand{\support}[1]{supp(#1)}
\newcommand{\prob}[1]{ \Pr \left [ #1 \right ]}
\newcommand{\veclength}{n}
\newcommand{\nodegroup}{\mathcal{U}}
\newcommand{\charvec}{f}
\newcommand{\bin}{bin}
\newcommand{\nodesizebound}{U}

\newcommand{\algname}{\textsc{StreamingCC}\xspace}
\newcommand{\sysname}{\textsc{GraphZeppelin}\xspace}
\newcommand{\modelname}{hybrid streaming }
\newcommand{\sketchname}{\textsc{CubeSketch}\xspace}
\newcommand{\sketchnames}{\textsc{CubeSketches}\xspace}
\newcommand{\oldsketchname}{$\ell_0$ sketch\xspace}
\newcommand{\batch}{batch\xspace}
\newcommand{\treename}{gutter tree\xspace}

\renewcommand{\subparagraph}[1]{\smallskip
\noindent
\emph{#1 }}


\newcommand{\ie}{\textit{i.e.,}~}

\newcommand{\edgeupdate}[1]{\textsc{edge\_update}(#1)\xspace}
\newcommand{\bufferinsert}[1]{\textsc{buffer\_insert}(#1)\xspace}
\newcommand{\dobatchupdate}{\textsc{do\_batch\_update}()\xspace}
\newcommand{\getbatch}{\textsc{get\_batch}()\xspace}
\newcommand{\sketchbatch}[1]{\textsc{update\_sketch\_batch}(#1)\xspace}
\newcommand{\listspanningforest}[1]{\textsc{list\_spanning\_forest}(#1)\xspace}
\newcommand{\cleanup}{\textsc{cleanup}()\xspace}

\newcommand{\etal}{\text{et al}.\xspace}

\date{}

\newcommand{\newtext}[1]{\textcolor{black}{#1}}

\newcommand{\namedcomment}[3]{{\sf \color{#2} #1: #3}}

\newcommand{\mab}[1]{\namedcomment{mab}{red}{#1}}
\newcommand{\mfc}[1]{\namedcomment{mfc}{purple}{#1}}
\newcommand{\david}[1]{\namedcomment{david}{red}{#1}}
\newcommand{\ahmed}[1]{\namedcomment{ahmed}{blue}{#1}}
\newcommand{\victor}[1]{\namedcomment{victor}{green}{#1}}
\newcommand{\evan}[1]{\namedcomment{evan}{orange}{#1}}
\newcommand{\abi}[1]{\namedcomment{abi}{cyan}{#1}}

\renewcommand{\mab}[1]{\todo[size=\tiny,color=green!40]{MAB: #1}}
\renewcommand{\mfc}[1]{\todo[size=\tiny,color=green!40]{MFC: #1}}
\renewcommand{\david}[1]{\todo[size=\tiny]{David: #1}}
\renewcommand{\ahmed}[1]{\todo[size=\tiny,color=yellow]{Ahmed: #1}}
\renewcommand{\victor}[1]{\todo[size=\tiny,color=yellow]{Victor: #1}}
\renewcommand{\evan}[1]{\todo[size=\tiny,color=red!40]{Evan: #1}}
\renewcommand{\abi}[1]{\todo[size=\tiny,color=red!40]{Abi: #1}}
\newcommand{\fixme}[1]{\todo[size=\tiny]{#1}}

\newcommand{\inline}[1]{\todo[inline,color=yellow,size=\tiny]{#1}}

\renewcommand{\epsilon}{\varepsilon}


\newcommand{\secref}[1]         {Section~\ref{sec:#1}}
\newcommand{\seclabel}[1]    {\label{sec:#1}}

\newcommand{\figref}[1]         {Figure~\ref{fig:#1}}

\newcommand{\tabref}[1]         {Table~\ref{tab:#1}}

\renewcommand{\eqref}[1]          {Eq.~\ref{eq:#1}}

\definecolor{bg}{rgb}{0.95,0.95,0.95}

\renewcommand{\paragraph}[1]{\vspace{0.1in}\noindent{\bf \boldmath #1}} 

\date{}

\begin{document}


\title{\sysname: Storage-Friendly 
Sketching
for Connected Components on Dynamic Graph Streams}



\author{David Tench}
\affiliation{%
  \institution{Rutgers University}
   \city{New Brunswick}
   \state{NJ}
   \country{USA}
}
\email{dtench@pm.me}

\author{Evan West}
\affiliation{%
  \institution{Stony Brook University}
  \city{Stony Brook}
  \state{NY}
  \country{USA}
}
\email{etwest@cs.stonybrook.edu}

\author{Victor Zhang}
\affiliation{%
  \institution{Rutgers University}
  \city{New Brunswick}
   \state{NJ}
   \country{USA}
}
\email{victor@vczhang.com}

\author{Michael A. Bender}
\affiliation{%
  \institution{Stony Brook University}
  \city{Stony Brook}
  \state{NY}
  \country{USA}
}
\email{bender@cs.stonybrook.edu}

\author{Abiyaz Chowdhury}
\affiliation{%
  \institution{Stony Brook University}
  \city{Stony Brook}
  \state{NY}
  \country{USA}
}
\email{abchowdhury@cs.stonybrook.edu}

\author{J. Ahmed Dellas}
\affiliation{%
  \institution{Rutgers University}
  \city{New Brunswick}
   \state{NJ}
   \country{USA}
}
\email{jad525@scarletmail.rutgers.edu}

\author{Martin Farach-Colton}
\affiliation{%
  \institution{Rutgers University}
  \city{New Brunswick}
   \state{NJ}
   \country{USA}
}
\email{martin@farach-colton.com}

\author{Tyler Seip}
\affiliation{
    \institution{MongoDB}
    \city{New York City}
    \state{NY}
    \country{USA}
}
\email{tylerjseip@gmail.com}

\author{Kenny Zhang}
\affiliation{%
  \institution{Stony Brook University}
  \city{Stony Brook}
  \state{NY}
  \country{USA}
}
\email{kzzhang@cs.stonybrook.edu}

\renewcommand{\shortauthors}{Tench, West, and Zhang et al.}


  \sloppy
\begin{abstract}

Finding the connected components of a graph is a fundamental
problem with uses throughout computer science and engineering.  
The task of computing connected components becomes more difficult when graphs are very large, or when they are dynamic, meaning the edge set changes over time subject to a stream of edge insertions and deletions.   A natural approach to computing the connected components on a large, dynamic graph stream is to buy enough RAM to store the entire graph.  However, the requirement that the graph fit in RAM is prohibitive for very large graphs.  Thus, there is an unmet need for systems that can process dense dynamic graphs, especially when those graphs are larger than available RAM.

We present a new high-performance streaming graph-processing system for computing the connected components of a graph.  This system, which we call \sysname, uses new linear sketching data structures (\sketchname) to solve the streaming connected components problem and as a result requires space asymptotically smaller than the space required for a lossless representation of the graph.  \sysname is optimized for massive dense graphs: \sysname can process millions of edge updates (both insertions and deletions) per second, even when the underlying graph is far too large to fit in available RAM.  As a result \sysname vastly increases the scale of graphs that can be processed.

\end{abstract}



\maketitle
\sloppy


\section{Introduction}
\seclabel{intro}

Finding the connected components of a graph is a fundamental 
problem with uses throughout computer science and engineering.  A recent survey by Sahu \etal~\cite{ubiquitous} of industrial uses of algorithms reports that, for both practitioners and academic researchers, connected components was the most frequently performed computation from a list of 13 fundamental graph problems that includes shortest paths, triangle counting, and minimum spanning trees. It has applications in scientific computing~\cite{scientific_computing1,scientific_computing2}, flow simulation~\cite{bioinformatics1}, metagenome assembly~\cite{metagenomics1,metagenomics2}, identifying protein families~\cite{protein1,bioinformatics2}, analyzing cell networks~\cite{cell}, pattern recognition~\cite{pattern1,pattern2}, graph partitioning~\cite{partition1,partition2}, random walks~\cite{randomwalks}, social network community detection~\cite{lee2014social}, graph compression~\cite{compression1,compression2}, medical imaging~\cite{tumor}, and object recognition~\cite{object}.  It is a starting point for strictly harder problems such as edge/vertex connectivity, shortest paths, and $k$-cores.  It is used as a subroutine for pathfinding algorithms such as Djikstra and $A^*$, some minimum spanning tree algorithms, and for many approaches to 
clustering~\cite{clustering1,clustering2,clustering3,clustering4,clustering5,clustering6}.

The task of computing connected components becomes more difficult when graphs are very large, or when they are \defn{dynamic}, meaning the edge set changes over time subject to a stream of edge insertions and deletions.  Applications on large graphs include metagenome assembly tasks that may include hundreds of millions of genes with complex relations~\cite{metagenomics1}, and large-scale clustering, which is a common machine learning challenge~\cite{clustering3}. Applications using dynamic graphs include identifying objects from a video feed rather than a static image~\cite{movingobject}, or tracking communities in social networks that change as users add or delete friends~\cite{dynamic_social,dynamic_web}.  And of course graphs can be both large and dynamic.  Indeed, Sahu \etal's~\cite{ubiquitous} survey reports that a majority of industry respondents work with large graphs ($> 1$ million nodes or $> 1$ billion edges) and a majority work with graphs that change over time.

A natural approach to computing the connected components on a large, dynamic graph stream is to buy enough RAM to store the entire graph. Indeed, dynamic graph stream processing systems such as Aspen and Terrace~\cite{aspen,terrace} can efficently query the connected components of a large graph subject to a stream of edge insertions and deletions when the graph fits in RAM.  However, the requirement that the graph fit in RAM is prohibitive for most large graphs:  for example, a graph with ten million nodes and an average degree of 1 million, using 2B to encode an edge, would require 10TB of memory.  We show in Section \ref{sec:experiments} that the Aspen and Terrace graph representations are significantly larger than this lower bound.

%

\begin{figure}[h]
  \centering
  \includegraphics[width=0.5\textwidth]{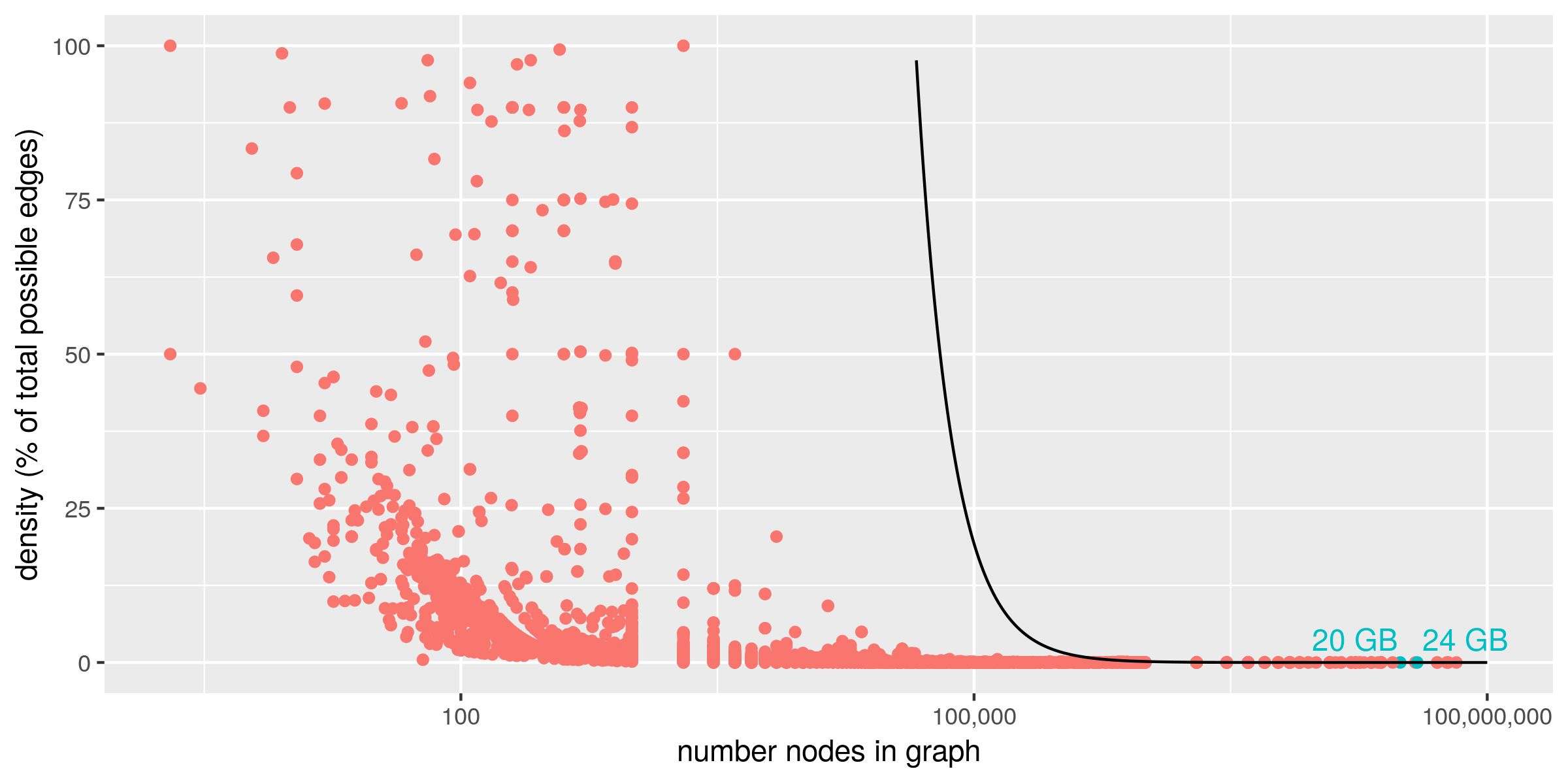}
\setlength{\belowcaptionskip}{-8pt}     
\caption{\textbf{Published graphs have few nodes or are sparse.}  Each point represents a graph data set from NetworkRepository. Any point below the dark line indicates a graph that can be represented as an adjacency list in 16GB of RAM.}\label{fig:graphs}
\end{figure}

In public graph-data-set repositories, most graphs are smaller than typical single-machine RAM sizes.
As \figref{graphs} illustrates, nearly all graphs in Network Repository~\cite{netrepo} can be stored as an adjacency list in less than 16GB.  This fixed memory budget furthermore implies that graphs with large numbers of vertices must be sparse.  
Similarly, the Stanford SNAP graph repository and the SuiteSparse repository have few graphs larger than 16GB, and graphs with many nodes are always extremely sparse.


Large, dense graphs, we argue, are absent from graph repositories not because they are unworthy of study, but because there are few tools to analyze them.  To illustrate: dense graphs do appear in Network Repository~\cite{netrepo}, but these graphs are never larger than a few GB; moreover, as the graphs' vertex count increases, the maximum density decreases such that the densest graphs never require more than 10 GB.  
A compelling explanation for the absence of large, dense graphs is selection bias: interesting dense graphs exist at all scales, but large, dense graphs are discarded as computationally infeasible and consequently are rarely published or analyzed. \newtext{Moreover, some large dense graphs are known to exist as proprietary datasets: for instance, Facebook works with graphs with $40$ million nodes and $360$ billion edges. These graphs are processed at great cost on large high-performance clusters, and are consequently not released for general study.\cite{facebookdense}}


Thus, there is an unmet need for systems that can process dense graphs, especially when those graphs are larger than available RAM.  Existing systems are not designed for large, dense, dynamic graph streams and instead optimize for other use cases.  Aspen and Terrace are optimized for large, sparse, dynamic graphs that completely fit in RAM, but their performance degrades significantly on dense graphs and graphs larger than RAM.  There is a deep literature on parallel systems for connected components computation in multicore~\cite{greiner1994comparison}, GPU~\cite{gpucc}, and distributed settings~\cite{krishnamurthy1997connected,buvs2001parallel} but these focus on static graphs which fit in RAM.  Many external memory~\cite{brodal2021experimental} and semi-external memory~\cite{abello2002functional} systems focus on graphs that are too large for RAM and must be stored on disk, but none of these systems focus on graphs whose edges can be deleted dynamically.


In this paper, we explore the general problem of connected components on large, dense, dynamic graphs.  We introduce \sysname, which computes the connected components of graph streams using a $O(\nodesize/\log^3(\nodesize))$-factor less space than an explicit representation of the graph.  \sysname uses a new $\ell_0$-sketching data structure that outperforms the state of the art on graph sketching workloads.  Additionally, 
\sysname employs node-based buffering strategies that improve I/O efficiency.  
These techniques allow \sysname to scale better than existing systems in several settings.  First, for in-RAM computation, \sysname's small size means it can process larger, denser graphs than Aspen or Terrace: specifically, dense graphs twice as large as Aspen and at least 40 times larger than Terrace given 64 GB of RAM.  Moreover, even if the input graph fits in RAM on all systems, \sysname is up to 2.5 times faster than Aspen and 36 times faster than Terrace on large dense graphs.  Finally, \sysname scales to SSD at the cost of a 29\% decrease to ingestion rate, and is more than two orders of magnitude faster than Aspen and Terrace, which suffer significant performance degradation when scaling out of RAM.  


\sysname employs a new sketch algorithm, overcoming a computational bottleneck of existing linear sketching techniques in the semi-streaming graph algorithms literature~\cite{l0sketch}.  The asymptotically best existing streaming connected components algorithm is Ahn \etal's \algname~\cite{Ahn2012, nelson2019optimal}, which has asymptotically low space and update time complexity.  \algname relies on $\ell_0$-sampling, which it uses to sample edges across arbitrary graph cuts.  However, the best known $\ell_0$-sampling algorithm suffers from high constant and polylogarithmic factors in its space and update time, as we show in Section~\ref{sec:l0revis}.  This overhead makes any implementation of the \algname data structure infeasibly slow and large. \sysname employs what we call \sketchname, a specialized $\ell_0$-sampling algorithm for sampling edges across graph cuts, to solve the connected components problem.  For large graphs \sketchname uses 4 times less space than the best general $\ell_0$-sampling algorithm and can process updates more than three orders of magnitude faster.


\sysname also uses new write-optimized data structures to overcome prohibitive resource requirements of existing semi-streaming algorithms.  Streaming algorithms have had a significant impact in large part because they require a small (polylogarithmic) amount of RAM.  In contrast, graph semi-streaming algorithms have higher RAM requirements: for most problems on a graph with $\nodesize$ nodes, sublinear RAM is insufficient to even represent a solution so $O(\nodesize \polylog(\nodesize))$ RAM is typically assumed.  
With the large $\polylog$ factors, this is often more RAM than is feasible in practice; see \secref{preliminaries}.
We propose the \modelname model, which enjoys the memory advantage of the streaming model while allowing enough space in external memory to compute on dynamic graph streams. In this model there is still $O(\nodesize \polylog(\nodesize))$ space available, but only $O(\polylog(\nodesize))$ of this space is RAM and the rest is disk, which may only be accessed in $O(\polylog(\nodesize))$-size blocks.  The simultaneous challenges in this model are to design algorithms that use small total space but also have low I/O complexity.  While existing graph semi-streaming algorithms use small space, their heavy reliance on hashing and random access patterns make them slow on disk.  We show that \sysname is simultaneously a space-optimal in-RAM semi-streaming algorithm and an I/O efficient external memory algorithm for the connected components problem. 
We also validate its performance experimentally, showing that \sysname can operate on modern consumer solid-state disk, increasing the scale of dynamic graph streams that it can process while incurring only a $29\%$ cost to stream ingestion rate.

\paragraph{Results.}
In this paper we establish the following:
\begin{itemize}[leftmargin=*,topsep=.5em]
 \setlength\itemsep{.5em}

    \item \textbf{\sysname: } We present a new high-performance streaming graph-processing system for computing the connected components of a graph.  This system, which we call \sysname, uses new linear sketching data structures (\sketchname, described below) to solve the streaming connected components problem and as a result requires a $O(\nodesize/\log^3(\nodesize))$-factor less space than any lossless representation of the graph.  \sysname is optimized for massive dense graphs: \sysname can process millions of edge updates (both insertions and deletions) per second, even when the underlying graph is far too large to fit in available RAM.  As a result \sysname vastly increases the scale of graphs that can be processed. 
    \vspace*{.4em}

    \item \textbf{\sketchname: $\ell_0$-sampling optimized for graph connectivity sketching.} We give a new $\ell_0$-sampling algorithm, \sketchname, for vectors of integers mod $2$.  Given a vector of length $\veclength$ and failure probability $\delta$, \sketchname uses $O(\log^2(\veclength)\log(1/\delta))$ bits of space and $O(\log(\veclength)\log(1/\delta))$ average time per update, which is a factor of $O(\log(\veclength))$ faster than the best existing $\ell_0$-sampler for general vectors~\cite{l0sketch}.
    \vspace*{.4em}

    \sketchname is a key subroutine in \sysname, where it is used to sample graph edges across arbitrary cuts as part of connected components computation.  Here it is used to sketch vectors of length ${{\nodesize}\choose {2}} = O(\nodesize^2)$, where $\nodesize$ denotes the number of nodes in the graph. We show experimentally that \sketchname is more than 3 orders of magnitude faster than the state-of-the-art $\ell_0$ sampling algorithm on graph streaming workloads.  
    
    
    In addition to the $O(\log(\nodesize))$-factor speedup, several non-asymptotic factors contribute to this performance improvement as well.  First, the existing algorithm's average update cost is dominated by $O(\log(\nodesize)\log(1/\delta))$ division operations, while \sketchname's average update cost is dominated by $O(\log(1/\delta))$ bitwise XOR operations, which are much faster.  In addition, the general algorithm performs 128-bit arithmetic operations (including division) when processing graphs with more than $10^5$ nodes, whereas \sketchname can use standard 64-bit operations to achieve the same error probability.  Finally, both algorithms match the asymptotic space lower bound but \sketchname uses roughly 4 times less space than the general algorithm.
    
    \item \textbf{Asymptotic guarantees of \sysname: space-optimality, I/O efficiency, $O(\log^3(V))$ time per update.}  \sysname's core algorithm matches the $O(\nodesize \log^3(\nodesize))$-bit space lower bound for the streaming connected components problem, and its average per-update time cost of $O(\log(\nodesize))$ is $O(\log(\nodesize)))$ times faster than the best existing algorithm~\cite{Ahn2012}.  Additionally, \sysname can efficiently ingest stream updates even when its sketch data structure is too large to fit in RAM: its I/O complexity is $sort(\text{length of stream}) + O(\nodesize/\blocksize \log^3(\nodesize) + \nodesize \log^*(\nodesize))$ and for realistic block sizes it is an I/O-optimal external-memory algorithm~\cite{ChiangGoGr95}.  As a result, given a fixed amount of RAM and disk, \sysname is capable of efficiently computing the connected components of larger graphs than existing algorithms in the streaming or external memory models.
    
    \item \textbf{Empirical achievements of \sysname: better scaling for in-memory, out-of-core, and parallel computation, and undetectable failure probability.}  \sysname's \sketchname-based design increases the size of input graphs that can be processed, scales well to persistent memory, and facilitates parallelism in stream ingestion.  As a result, \sysname can ingest 2-5 million edge updates per second on a single scientific workstation (see Section~\ref{sec:experiments}), both when its data structures reside completely in RAM and also when they reside on fast disk. As a result of these advantages, \sysname is faster and more scalable than the state of the art on large, dense graphs:


    \begin{itemize}[topsep=.5em]
        \item \textbf{\sysname handles larger graphs for in-RAM computation.}  \sysname's space-efficient \sketchname allows it to process graph streams larger than can be stored explicitly in a fixed amount of RAM and give it an asymptotic ${O}(\nodesize/ \log^3(\nodesize))$ space advantage over state-of-art systems on dense graphs.  Given the polylogarithmic factors and constants, we need to determine the actual crossover point where \sysname processes graphs more compactly than Aspen and Terrace. We show empirically that this crossover point occurs when the space budget is between 32 and 64 gigabytes. That is, for dense graphs on several hundred thousand nodes, \sysname is $40\%$ more compact than Aspen and several times more compact than Terrace, and this advantage only increases for larger space budgets or input sizes. Additionally, for dense graph streams on $2^{18}$ nodes \sysname ingests updates 24 times faster than Terrace and twice as fast as Aspen.
        \item \textbf{\sysname can use persistent memory to handle even larger graphs.}  \sysname's node-based work buffering strategy facilitates out-of-core computation, allowing \sysname to use SSD to increase the scale of graph streams it can process while incurring a small cost to performance.  We show experimentally that \sysname ingests updates more than two orders of magnitude faster than Aspen and Terrace when all systems swap to disk, and that using SSD slows \sysname stream ingestion by only 29\%.
        \item \textbf{\sysname's stream ingestion is highly parallel.}  \sysname employs a node-based work buffering strategy that facilitates parallelism and improves data locality.  We show experimentally that \sysname's multithreaded stream ingestion system scales well with more threads:  its ingestion rate is 25 times higher with 46 threads than an optimized single-thread implementation.  
        \item \textbf{\sysname's theoretical failure probability is undetectable in practice.}  \sysname and similar graph sketching approaches achieve their remarkable space efficiency at the cost of a random chance of failure.  We show empirically that \sysname's observed failure rate is even lower than the proved (polynomially small) upper bound: in fact, for 5000 trials on real-world and synthetic graphs it never failed.
    \end{itemize}
\end{itemize}

\vspace*{-1em}

\section{Preliminaries}
\label{sec:preliminaries}

\subsection{Graph Streaming \& Hybrid Graph Streaming}
\label{subsec:models}

In the \defn{graph semi-streaming} model~\cite{semistreaming1,semistreaming2} (sometimes just called the \defn{graph streaming} model), an algorithm is presented with a \defn{stream} $\graphstream$ of updates (each an edge insertion or deletion) where the length of the stream is $\streamlength$. 
Stream $\graphstream$ defines an input graph $\graph = (\nodes,\edges)$ with $\nodesize = |\nodes|$ and $\edgesize = |\edges|$.  The challenge in this model is to compute (perhaps approximately) some property of $\graph$ given a single pass over $\graphstream$ and at most $O(\nodesize \polylog(\nodesize))$ words of memory.  
Each update has the form $((u,v), \Delta)$ where $u,v \in \edges, u \neq v$ and $\Delta \in \{-1,1\}$ where $1$ indicates an edge insertion and $-1$ indicates an edge deletion.  Let $\streamelement_i$ denote the $i$th element of $\graphstream$, and let $\graphstream_i$ denote the first $i$ elements of $\graphstream$.  
Let $\edges_i$ be the edge set defined by $\stream_i$, i.e., those edges which have been inserted and not subsequently deleted by step $i$.  The stream may only insert edge $e$ at time $i$ if $e \notin \edges_{i-1}$, and may only delete edge $e$ at time $i$ if $e \in \edges_{i-1}$.

In Section ~\ref{sec:io} we additionally use a new variant of the graph semi-streaming model, which we call the \defn{hybrid graph streaming setting} (since it incorporates some components of the external memory model~\cite{externalmemory} into the semi-streaming model).  In this setting, there is an additional constraint on the type of memory available for computation: only $\memsize = \Omega(\polylog(\nodesize))=o(\nodesize)$ RAM is available, and $\disksize = O(\nodesize \polylog(\nodesize))$ disk space is available. A word in RAM is accessed at unit cost, and disk is accessed in blocks of $\blocksize = o(\memsize)$ words at a cost of $\blocksize$ per access. Any semi-streaming algorithm can be run with this additional constraint, but may become much slower if the algorithm makes many random accesses to disk.  The algorithmic challenge in the hybrid graph streaming setting is to minimize time complexity (of ingesting stream updates and returning solutions) in addition to satisfying the typical limited-space requirement of the data stream setting. In Section ~\ref{sec:io} we show how \sysname can be adapted to this model, and is both a space-optimal single pass streaming algorithm with $O(\log^2(\nodesize))$ update time and an I/O efficient external memory algorithm. 

\begin{problem}[\textbf{The streaming Connected Components problem.}]
Given a insert/delete edge stream of length $\streamlength$ that defines a graph $\graph = (\nodes, \edges)$, return a insert-only edge stream that defines a spanning forest of $\graph$.
\end{problem}

\subsection{Prior Work in Streaming Connected Components}
\label{subsec:alg}
We summarize \algname, Ahn \etal's~\cite{Ahn2012} semi-streaming algorithm for computing a spanning forest (and therefore the connected components) of a graph.

For each node $v_i$ in $G$, define the \defn{characteristic vector} $\charvec_i$ of $v_i$ to be a 1-dimensional vector indexed by the set of possible edges in $\graph$.  $\charvec_i[(j,k)]$ is only nonzero when $i = j$ or $i = k$ and edge $(j,k) \in \edges$.  That is, $\charvec_i \in \{-1,0,1\}^{\nodesize \choose 2}$ s.t. for all $0 \leq j < k < {\nodesize \choose 2}$: 
$$\charvec_i[(j,k)] = \left\{ \begin{array}{ll}
            1 & \quad i = j \text{ and }(v_j, v_k) \in \edges \\
            -1 & \quad i = k \text{ and }(v_j, v_k) \in \edges \\
            0 & \quad \text{otherwise} 
            
        \end{array}\right\}
$$


Crucially, for any $S \subset \nodes$, the sum of the characteristic vectors of the nodes in $S$ is a direct encoding of the edges across the cut $(S, \nodes \setminus S$).  That is, let $x = \sum_{v \in S} \charvec_v$ and then 
$|x[(j,k)]| = 1$ iff $(j,k) \in E(S, \nodes \setminus S)$.

Using these vectors, we immediately have a (very inefficient) algorithm for computing the connected components from a stream:  Initialize $\charvec_i = \{0\}^{\nodesize \choose 2}$ for all $i$. For each stream update $s = ((u,v), \Delta)$, set $\charvec_u[u,v] += \Delta$ and $\charvec_v[u,v] += -\Delta$.

After the stream, run Boruvka's algorithm~\cite{boruvka} for finding a spanning forest as follows.  For the first round of the algorithm, from each $a_i$ arbitrarily choose one nonzero entry $(w,y)$ (an edge in $\edges$ s.t. w = i or y = i).  Add $e_i$ to the spanning forest.  For each connected component $C$ in the spanning forest, compute the characteristic vector of $C$: $a_C = \sum_{v\in C} \charvec_v$.  Proceed similarly for the remaining rounds of Boruvka's algorithm: in each round, choose one nonzero entry from the characteristic vector of each connected component and add the corresponding edges to the spanning forest.  Sum the characteristic vectors of the component nodes of the connected components in the spanning forest, and continue until no new merges are possible.  This will take at most $O(\log(\nodesize))$ rounds.

The key idea to make this a small-space algorithm is to use ``$\ell_0$-sampling''~\cite{l0sketch} to run this version of Boruvka's algorithm by compressing each characteristic vector $\charvec_i$ into a data structure of size $O(\log^2(\nodesize))$ that can return a nonzero entry of $\charvec_i$ with constant probability.

\begin{definition}
\label{def:l0}
A sketch algorithm is a $\delta$ $\ell_0$-sampler if it is
\begin{enumerate}
    \item \textbf{Sampleable:} it can take as its input a stream of updates to the coordinates of a non-zero vector $a$, and output a non-zero coordinate $(j, \charvec[j])$ of $\charvec$.  $\sketch(\charvec)$ denotes the sketch of vector $\charvec$.
    \item \textbf{Linear:} for any vectors $\charvec$ and $g$, $\sketch(\charvec) + \sketch(g) = \sketch(\charvec+g)$ and this operation preserves sampleability, \ie $\sketch(\charvec+g)$ can output a nonzero coordinate of vector $\charvec+g$.
    \item \textbf{Low Failure Probability.} the algorithm returns an incorrect or null answer with probability at most $\delta$.
\end{enumerate}
\end{definition}

For all $\ell_0$-samplers in this paper, $\sketch(\charvec)$ is a vector and adding two sketches is equivalent to adding their vectors elementwise.

\begin{lemma} (Adapted from~\cite{l0sketch}, Theorem 1):
Given a 2-wise independent hash family $\mathcal{F}$ and an input vector of length $n$, there is an $\delta$ $\ell_0$-sampler using $O(\log^2(n)\log(1/\delta))$ bits of space.
\end{lemma}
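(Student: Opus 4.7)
The plan is to follow the standard construction of Jowhari--Saglam--Tardos as presented by Cormode and Firmani, adapted so the entire data structure fits within the claimed space. The two main ingredients are \emph{geometric subsampling} of the coordinates and a small \emph{$1$-sparse recovery} structure per subsampling level, with the whole thing repeated to amplify the success probability.

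First I would draw a $2$-wise independent hash function $h\colon [n]\to\{0,1,\dots,\lceil\log_2 n\rceil\}$ with $\Pr[h(i)=j]=2^{-(j+1)}$ (and the top level catching the remaining mass). Define, for each level $j$, the subsampled vector $\charvec^{(j)}$ obtained by zeroing out the coordinates $i$ with $h(i)\ne j$. The key probabilistic claim is that, because $\|\charvec\|_0$ is unknown but deterministic, there exists some level $j^\star$ at which the expected number of surviving nonzeros is between $1$ and $2$; at that level, a second-moment/Chebyshev argument (which only needs pairwise independence) shows that $\charvec^{(j^\star)}$ has exactly one nonzero entry with probability bounded below by an absolute constant $c>0$.

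Next I would equip every level $j$ with a $1$-sparse recovery sketch that maintains, under linear updates, the three quantities
\[
  \phi_0^{(j)}=\sum_i \charvec^{(j)}[i],\quad
  \phi_1^{(j)}=\sum_i i\cdot \charvec^{(j)}[i],\quad
  \phi_2^{(j)}=\sum_i r^{\,i}\,\charvec^{(j)}[i] \pmod{p},
\]
where $p=\operatorname{poly}(n)$ is a prime and $r\in\mathbb{F}_p$ is chosen $2$-wise independently. If the support of $\charvec^{(j)}$ is exactly $\{i^\ast\}$ with value $v^\ast$, then $(i^\ast,v^\ast)=(\phi_1^{(j)}/\phi_0^{(j)},\phi_0^{(j)})$, and the fingerprint $\phi_2^{(j)}\equiv v^\ast r^{i^\ast}\pmod p$ will match; when the support has size $\ge 2$, the Schwartz--Zippel-style argument over the random $r$ shows the fingerprint check fails except with probability $O(1/n)$. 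Each level's sketch uses $O(\log n)$ bits, so all $O(\log n)$ levels together use $O(\log^2 n)$ bits.

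The sampling procedure scans levels and outputs the unique coordinate from the lowest level whose $1$-sparse check passes (or NULL if none does). By the previous paragraph this succeeds with probability $\ge c-O(1/n)\ge c/2$ for $n$ large. To drive the failure probability down to $\delta$, I would run $\lceil\log_2(1/\delta)/\log_2(1/(1-c/2))\rceil=O(\log(1/\delta))$ independent copies and return the first non-NULL sample; by linearity each copy still supports additive sketch combination as required by the $\ell_0$-sampler definition. Total space is $O(\log^2(n)\log(1/\delta))$ bits, matching the claim.

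The main obstacle I anticipate is the constant-probability analysis at the "right" level using only $2$-wise independence: I would need to be careful that the Chebyshev bound on the number of surviving nonzeros is tight enough to conclude the level with expectation near $1$ is $1$-sparse with constant probability, and that the fingerprint check correctly rejects heavier levels. The rest (linearity, update time, and the amplification/space accounting) is mechanical.
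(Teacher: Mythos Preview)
The paper does not actually prove this lemma; it is stated as a citation of prior work (``Adapted from~\cite{l0sketch}, Theorem~1'') and no proof is given. What the paper does provide is a pseudocode description of the cited algorithm (Figure~\ref{fig:agm_code}), and your proposal reconstructs exactly that construction: your counters $\phi_0,\phi_1,\phi_2$ are the paper's $b,a,c$, your fingerprint check $\phi_2\equiv v^\ast r^{i^\ast}\pmod p$ is the paper's line~\ref{line:oldchecksum}, and your amplification by $O(\log(1/\delta))$ independent copies is the paper's outer loop over $q\log(1/\delta)$ columns. So your approach is essentially the one the paper invokes by reference.

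Two minor remarks. First, your subsampling assigns each coordinate to a single level ($\Pr[h(i)=j]=2^{-(j+1)}$), whereas the paper's variant uses nested levels ($hash(e)\equiv 0\pmod{2^i}$); both are standard and both yield a level that is $1$-sparse with constant probability under pairwise independence. Second, the constant-probability step you flag as the main obstacle is carried out explicitly later in the paper for \sketchname (Lemma~\ref{lem:goodset}), where the same inclusion-type bound gives $\Pr[\support{\indexsubset_{i,j}}=1]>1/8$ at the level $i$ with $2^{i-2}\le\|x\|_0<2^{i-1}$; that argument ports over directly and resolves your concern without needing Chebyshev.
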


We denote a $\ell_0$ sketch of a vector $x$ as $\sketch(x)$.  Since the sketch is linear, $\sketch(x) + \sketch(y) = \sketch(x+y)$ for any vectors $x$ and $y$.  This allows us to process stream updates as follows: we maintain a running sum of the sketches of each stream update, which is equivalent to a sketch of the vector defined by the stream. That is, let $a_i^t$ denote $a_i$ after stream prefix $\stream_t$. For the $j$th stream update $\streamelement_j = ((i,x),\Delta)$ we obtain $\sketch(\charvec_i^j) = \sketch(\streamelement_j) + \sketch(\charvec_i^{j-1})$.  

Linearity also allows us to emulate the merging step of Boruvka's algorithm by summing the sketches of all nodes in each connected component.  We require $O(\log(\nodesize))$ independent $\ell_0$ sketches for each $v \in \nodes$, one for each round\footnote{In the original paper the authors note that adaptivity concerns require the use of new sketches for each round of Boruvka's algorithm.}, and each must succeed with probability $1-1/100$ so the size of the sketch data structure for each node is $O(\log^3(\nodesize))$.  We refer to the sketch data structure for each node as a \defn{node sketch} and each of its $O(\log(\nodesize))$ $\ell_0$-subsketches as \sketchnames.  The total size of the entire data structure is $O(\nodesize \log^3(\nodesize))$.  Recent work~\cite{nelson2019optimal} has shown that this is optimal.

\newtext{
The above description assumes that the exact number of nodes $\nodesize$ is known a priori. This is not strictly necessary: All we need is a loose upper bound on the number of nodes we will eventually see.  Given an upper bound $U$ s.t. $\nodesize \leq \nodesizebound \leq \nodesize^c$ for some constant $c$, we can simply define $f_i$ to have length ${\nodesizebound \choose 2}$.  The node sketch of $f_i$ then has size $O(\log^3(\nodesizebound^2)) = O(\log^3(\nodesize))$. 
We create a node sketch for $v_i$ the first time it appears in a stream update $(v_i, v_j)$ so the total space cost is still $O(\nodesize \log^3(\nodesize))$.
Similarly, even if nodes are identified in the input stream as arbitrary strings instead of integer IDs in the range $[\nodesize]$, we can use a hash function with range $[O(\nodesizebound^2)]$ to ensure that every node gets a unique integer ID with high probability.
}

\section{$\ell_0$-sampling Revisited}
\label{sec:l0revis}
Existing $\ell_0$-sampling algorithms are asymptotically small and fast to update, but in practice high constant and logarithmic overheads in size and update time prevent these algorithms from being useful for a streaming connected components algorithm.  We now review some details of the best known $\ell_0$-sampling algorithm and demonstrate experimentally that using it to emulate Boruvka's algorithm for graph connectivity would be prohibitively slow and would require an enormous amount of space.  Then we introduce an $\ell_0$-sketching algorithm which exploits the structure of the connected components problem to improve performance, and experimentally demonstrate that it is 4 times smaller and 3 orders of magnitude faster to update than the state of the art.

\newtext{
The best known $\ell_0$-sampling algorithm~\cite{l0sketch} is summarized in Figure~\ref{fig:agm_code}. Given a vector $\charvec \in \mathbb{Z}^\veclength$, the data structure consists of a matrix of $\log(\veclength)$ by $q\log(1/\delta)$ "buckets" (for some small constant $q$). Each bucket represents the values at a random subset of positions of $\charvec$. This representation is lossy: we can recover a nonzero element of $\charvec$ from bucket $\indexsubset_{i,j}$ only when a single position in $\indexsubset_{i,j}$ is nonzero. Equivalently, the support of $\indexsubset_{i,j}$, denoted by $\support{\indexsubset_{i,j}}$, is 1. If $\support{\indexsubset_{i,j}} = 1$, we say that $\indexsubset_{i,j}$ is \defn{good}, and say that it is \defn{bad} otherwise. With probability $1-\delta$, $\exists i,j$ s.t. $\indexsubset_{i,j}$ is good and therefore we can recover a nonzero value from $\charvec$. 
Each bucket includes a \defn{checksum} that indicates whether it is good with high probability.
}

\begin{figure}
    \centering
    \includegraphics[width=0.35\textwidth]{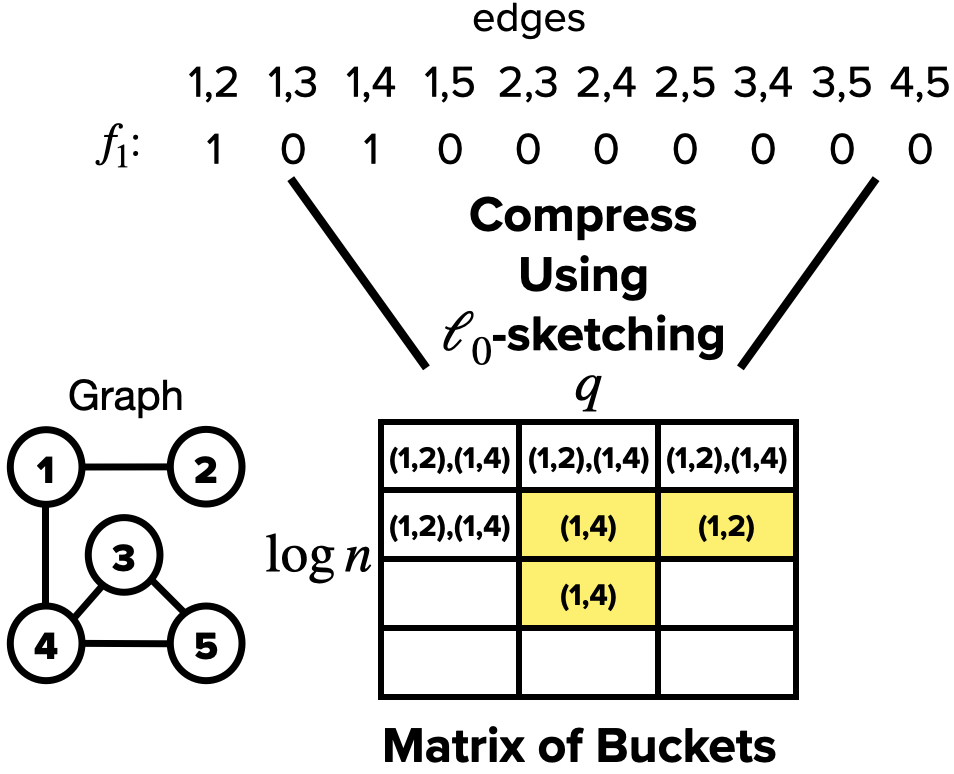}
    \setlength{\belowcaptionskip}{-16pt} 
    \caption{Compressing a characteristic vector. Each highlighted cell contains one nonzero element from the vector and can be sampled, yielding an edge incident to node 1.}
    \label{fig:sketch_fig}
\end{figure}

\newtext{
Each bucket $\indexsubset_{i,j}$ contains three values: $a_{i,j}, b_{i,j},$ and $c_{i,j}$. If $\indexsubset_{i,j}$ is good, then the checksum test on line \ref{line:oldchecksum} passes and $\charvec[b_{i,j}] = a_{i,j}/b_{i,j}$. If the checksum test fails $\indexsubset_{i,j}$ is bad.
}

\newtext{
When a stream update $(e, \Delta)$ arrives, its membership in each bucket is determined using the hash function on line \ref{line:oldhash}: if $hash(e)\equiv0\pmod{2^i}$ then $e$ is in $\indexsubset_{i,j}$. If it is in bucket $\indexsubset_{i,j}$, it is applied to $a_{i,j}, b_{i,j},$ and $c_{i,j}$ according to the logic on lines \ref{line:1}, \ref{line:2}, and \ref{line:3}. When the sketch is queried, it checks whether each bucket passes the checksum test on line \ref{line:oldchecksum}. If some bucket passes this test, its sampled value is returned. Figure \ref{fig:sketch_fig} gives an example of this process. For a more thorough analysis of this algorithm see~\cite{l0sketch}.
}


\begin{figure}[!t]
\begin{small}
\begin{algorithmic}[1]
\Function {update\_sketch}{idx, $\Delta$}
\Comment{Add $\Delta$ to vector index `idx'}
    \ForAll {col $\in [0, q\log(1/\delta))$}
        \State col\_hash $\gets$ hash(col, idx) \label{line:oldhash}
        \State row $\gets 0$
        \State checksum $\gets r[\text{col}]^\text{idx} \mod p$
        \While {row == 0 OR col\_hash[row-1] == 0}
            \State col[row].a $\gets \text{col[row].a} + \text{idx} \times \Delta$ \label{line:1}
            \State col[row].b $\gets \text{col[row].b} + \Delta$ \label{line:2}
            \State col[row].c $\gets \text{col[row].c} + \Delta \times \text{checksum}$ \label{line:3}
            \State row $\gets \text{row} + 1$
        \EndWhile
    \EndFor
\EndFunction

\Function{query\_sketch}{ }
\Comment{Get a non-zero vector index}
    \ForAll {col $\in [0, q\log(1/\delta))$}
        \ForAll {bucket $\in$ col}
            \State value $\gets \text{bucket.a} / \text{bucket.b}$
            \If {value is integer \hspace{0.25cm} AND \hspace{0.25cm} bucket.c == $\text{bucket.b} \times r[\text{col}]^\text{value}\mod p$} \label{line:oldchecksum}
                \State \textbf{return} \{value, bucket.b\}
                \Comment{Found a good bucket, done}
            \EndIf
        \EndFor
    \EndFor
    \State \textbf{return} sketch\_failure
    \Comment{All buckets bad}
\EndFunction
\end{algorithmic}
\end{small}
\setlength{\belowcaptionskip}{-16pt} 
\caption{State---of---the---art $\ell_0$-sampling algorithm.}
\label{fig:agm_code}
\end{figure}


\paragraph{Existing \boldmath $\ell_0$-samplers are slow to update for graph streaming workloads.}
\newtext{Note in line \ref{line:3} that updating $c_{i,j}$ of bucket $\indexsubset_{i,j}$ requires modular exponentiation, necessitating $O(\log(\veclength))$ multiplication operations and $O(\log(\veclength))$ modulo operations (where the modulus is a large prime). As a result, in the worst case this algorithm performs $O(\log(\veclength)\log(1/\delta))$ arithmetic operations per stream update.  In the average case, the update modifies only $O(\log(1/\delta))$ buckets, however, the cost to generate checksums is still $O(\log(\veclength)\log(1/\delta))$.  Moreover, for sufficiently large vectors, this modular exponentiation must be done on integers larger than a 64-bit machine word, drastically increasing computation time in practice.
}

\begin{figure}
\begin{center}

\begin{tabular}{ |c|c|c|c| } 
 \hline
 Vector Length & Standard $\ell_0$ & \sketchname & Speedup\\ 
 \hline
 $10^3$ & 223,000 & 7,322,000 & 33.1 x\\ 
 \hline
 $10^4$ & 124,000 & 5,180,000 & 42.3 x\\ 
 \hline
 $10^5$ & 54,800 & 4,384,000 & 79.8 x\\ 
 \hline
 $10^6$ & 29,300 & 3,730,000 & 127 x\\ 
 \hline
 $10^7$ & 20,900 & 3,177,000 & 151 x\\ 
 \hline
 $10^8$ & 16,300 & 2,825,000 & 172 x\\ 
 \hline
 $10^9$ & 13,100 & 2,587,000 & 196 x\\ 
 \hline
 $10^{10}$ & 1,350 & 2,272,000 & 1,680 x\\ 
 \hline
 $10^{11}$ & 918 & 2,108,000 & 2,300 x\\ 
 \hline
 $10^{12}$ & 833 & 1,963,000 & 2,350 x\\ 
 \hline
\end{tabular}
\end{center}
\setlength{\belowcaptionskip}{-16pt} 
\caption{\sketchname is faster than standard $\ell_0$ sketching. Ingestion rates (in updates/second) are listed for both $\ell_0$ sketching methods.
}
\label{fig:l0speedup}
\end{figure}

The ``Standard $\ell_0$'' column of Figure ~\ref{fig:l0speedup} displays the single-threaded ingestion rate in updates per second of the state-of-the-art $\ell_0$-sampling algorithm for vectors of various sizes. These results were obtained on a Dell Precision 7820 with 24-core 2-way hyperthreaded Intel(R) Xeon(R) Gold 5220R CPU @ 2.20GHz and 64GB 4x16GB DDR4 2933MHz RDIMM ECC Memory. Note how ingestion rate decreases as vector length increases, and in particular there is a catastrophic slowdown at vector length $10^{10}$.  This dramatic decrease in ingestion rate is due to the need to perform modular exponentiation on integers larger than $2^{64}$, requiring the use of 128-bit integers thus slowing computation.  When sketching characteristic vectors of length $O(\nodesize^2)$ for streaming connected components, 128-bit integers are required when $\nodesize \geq 10^5$.


When using $\ell_0$-sampling for Boruvka emulation, each stream update $((u,v), \Delta)$ must be applied to the node sketches of $u$ and $v$.  For any node $u$, the node sketch of $u$ is made up of $\log(\nodesize)$ $\ell_0$-sketches of $a_u$. Each of these $\ell_0$-sketches has a failure rate of $\delta = 1/100$ and, therefore, a width of $\log(1/\delta) = 7$. Processing a stream update requires $2 \cdot 7 \cdot O(\log^2(|a_u|) = 28 \cdot O(\log^2(\nodesize))$ multiplication and modulo operations.   For a graph with a million nodes, \algname must apply each update to $28$ sketch vectors of length $10^{12}$, so it can process roughly $800/28 = 29$ edge updates per second.

\begin{figure}
\begin{center}
\begin{tabular}{ |c|c|c|c| } 
\hline
Vector Length & Standard $\ell_0$ & \sketchname & Size Reduction\\ 
\hline
$10^{3}$ & 2.30KiB & 1.21KiB & 1.9 x\\
\hline
$10^4$ & 4.98KiB & 2.34KiB & 2.1 x\\
\hline
$10^{5}$ & 7.23KiB & 3.43KiB & 2.1 x\\
\hline
$10^6$ & 9.90KiB & 4.73KiB & 2.1 x\\
\hline
$10^{7}$ & 14.1KiB & 6.79KiB & 2.1 x\\
\hline
$10^8$ & 17.8KiB & 8.58KiB & 2.1 x\\
\hline
$10^{9}$ & 21.9KiB & 10.6KiB & 2.1 x\\
\hline
$10^{10}$ & 55.9KiB & 13.6KiB & 4.1 x\\
\hline
$10^{11}$ & 66.0KiB & 16.1KiB & 4.1 x\\
\hline
$10^{12}$ & 77.0KiB & 18.8KiB & 4.1 x\\
\hline
\end{tabular}
\end{center}

\setlength{\belowcaptionskip}{-8pt} 
\caption{\sketchname is significantly smaller than standard $\ell_0$ sketching.
Sizes are listed for both $\ell_0$ sketching methods.}
\label{fig:l0size}
\end{figure}


\paragraph{Existing \boldmath $\ell_0$-samplers are large for graph streaming workloads.}
Each node sketch consists of $\log(\nodesize)$ \oldsketchname{}es and each $\ell_0$-sketch is a vector of $7c\log(\nodesize^2) = 14c\log(\nodesize)$ buckets. Each bucket is composed of three integers so a node sketch consists of $42c \log^2(\nodesize)$ integers. 
As noted above, 128-bit(16B) integers are necessary when $\nodesize \geq 10^5$, so for $c = 2$ the size of a node sketch is $1344\log^2(\nodesize)$B.  
Since there is a node sketch for each node in the graph, the entire streaming data structure has size $1344\nodesize\log^2(\nodesize)$B.  
When $\nodesize = 1$ million, this data structure is roughly 500 GiB in size.


\paragraph{Using existing \boldmath $\ell_0$-samplers offers no advantage on modern hardware.}
The goal of a streaming connected components algorithm is to use smaller space than would be required to store the entire graph explicitly.  As we demonstrate empirically in Section ~\ref{sec:experiments}, the most space-efficient dynamic graph processing system, Aspen, requires roughly 4B of space for each edge in the graph.  A straightforward back-of-the-envelope calculation reveals that even for dense graphs with average degree $\nodesize/2$, \algname would use less space than Aspen only on very large inputs which require enormous RAM capacities and decades of processing time: $1344\nodesize\log^2(\nodesize)$B $\leq 4B \cdot \nodesize^2/4$ only when $\nodesize \geq 5\cdot 10^5$. Processing a half a million-node graph using \algname would require 220 GB of RAM and, at an ingestion rate of less than 35 edges per second, would take more than 56 years to process the graphs' roughly 64 billion edges.  While \algname's space complexity is much smaller than explicit graph representations like Aspen's asymptotically, in absolute terms it offers no advantage on modern hardware.

\subsection{Improved \boldmath  $\ell_0$-Sampler for Graph Connectivity}
\label{subsec:bettersketch}
We present \sketchname, an $\ell_0-$sampling algorithm for vectors on the integers mod 2, which is smaller than the best existing general-purpose $\ell_0$-sampling algorithm and is asymptotically faster to update. 
Since addition of characteristic vectors (Section \ref{subsec:alg}) can be thought of as addition over vectors $\in \mathbb{Z}_2$, \sketchname is sufficient for solving the connected components problem. Additionally, \sketchname may be useful for other sketching algorithms for problems such as edge- or vertex-connectivity, testing bipartiteness, and finding minimum spanning trees and densest subgraphs~\cite{Ahn2012,GuhaMT15,Ahn2012_2,10.1007/978-3-662-48054-0_39}.

\newtext{
Since \sketchname's goal is to recover a nonzero entry from vectors of integers mod 2, it can use a much simpler bucket data structure than the general-purpose $\ell_0$-sketch, improving space and update time costs.  The \sketchname algorithm is summarized in Figure \ref{fig:cube_code}. Each bucket $\indexsubset_{i,j}$ maintains 2 values: $\alpha_{i,j}$, which is used to recover the position of a single nonzero entry, and $\gamma_{i,j}$, which is used as a checksum.
}
$\alpha_{i,j}$ and $\gamma_{i,j}$ are each $O(\log(n))$ bits, and therefore require $O(1)$ machine words.
Since each vector value is either 0 or 1, $\Delta = 1$ for every stream update $(e, \Delta)$, and so for simplicity we refer to the update as $(e)$.

\begin{figure}[!t]
\begin{small}
\begin{algorithmic}[1]
\Function {update\_sketch}{idx}
\Comment{Toggle vector index `idx'}
    \ForAll {col $\in [0, q\log(1/\delta)$}
        \State col\_hash $\gets$ hash$_1$(col, idx)
        \State row $\gets 0$
        \State checksum $\gets$ hash$_2$(col, idx)
        \While {row == 0 OR col\_hash[row-1] == 0}
            \State col[row].$\alpha$ $\gets \text{col[row].$\alpha$} \oplus \text{idx}$
            \State col[row].$\gamma$ $\gets \text{col[row].$\gamma$} \oplus \text{checksum}$
            \State row $\gets \text{row} + 1$
        \EndWhile
    \EndFor
\EndFunction

\Function{query\_sketch}{ }
\Comment{Get a non-zero vector index}
    \ForAll {col $\in [0, q\log(1/\delta))$}
        \ForAll {bkt $\in$ col}
            \If {bkt.$\gamma$ == hash$_2$(col, bkt.$\alpha$)}
                \State \textbf{return} bkt.$\alpha$
                \Comment{Found a good bucket, done}
            \EndIf
        \EndFor
    \EndFor
    \State \textbf{return} sketch\_failure
    \Comment{All buckets bad}
\EndFunction
\end{algorithmic}
\end{small}

\setlength{\belowcaptionskip}{-12pt} 
\caption{Pseudocode for the \sketchname{} algorithm. }
\label{fig:cube_code}
\end{figure}

\newtext{
Function \textsc{update}\_\textsc{sketch}() in Figure \ref{fig:cube_code} describes how \sketchname processes a stream update. Given update $(e)$, if $h_1(e)\equiv0\pmod{2^i}$ then $e$ is in $\indexsubset_{i,j}$.  For each such $\indexsubset_{i,j}$, $\alpha_{i,j} = \alpha_{i,j} \oplus \bin(e)$ and $\gamma_{i,j} = \gamma_{i,j} \oplus h_2(\bin(e))$ where $\oplus$ denotes bitwise XOR, $\bin(e_w)$ denotes the binary representation of $e_w$, and $h_1$ and $h_2$ are hash functions drawn from a 2-wise independent family of hash functions. Note that the procedure for determining whether $e \in \indexsubset_{i,j}$ is identical to the algorithm in Figure \ref{fig:agm_code}, but the procedure for updating $\indexsubset_{i,j}$ is different. Importantly, \sketchname never performs modular exponentiation, which as we will show makes it a $log(\nodesize)$ factor faster than the existing algorithm in the average case. As a result of \textsc{update}\_\textsc{sketch}(), given a sequence of updates $(e_1), (e_2), \dots,  (e_k)$ to the data structure,
}

\begin{align}
\alpha_{i,j} &= \bigoplus_{w \in [k]} \bin (e_w) \label{eq:1} \\
\gamma_{i,j} &= \bigoplus_{w \in [k]} h_2(\bin (e_w)) \label{eq:2}
\end{align}

\newtext{
Function \textsc{query}\_\textsc{sketch}() describes how \sketchname returns a nonzero entry of the input vector. For any bucket $\indexsubset_{i,j}$:
}
$$
result = \left\{
        \begin{array}{ll}
            e' & \quad \text{ if } \alpha_{i,j} = \bin(e')\: \text{ and } \: \gamma_{i,j} = h_2(\bin(e')) \\
            \textsc{FAIL} & \quad \text{ if } \alpha_{i,j} = 0\: \text{ and } \: \gamma_{i,j} = 0 \text{ OR}\\
             & \quad \text{ if } \gamma_{i,j} \neq h_2(\alpha_{i,j})
        \end{array}
    \right.
$$


A nonzero entry is recovered from \sketchname by attempting to recover a nonzero entry from each $\indexsubset_{i,j}$ until one returns a value other than FAIL.  If no such bucket exists, the algorithm returns NULL.

\begin{theorem}
\sketchname is an $\ell_0$ sampler that, for input vector $x \in \mathbb{Z}_2^\veclength$, has space complexity $O(\log^2(\veclength)\log(1/\delta))$, worst-case update complexity $O(\log(\veclength)\log(1/\delta))$, average-case update complexity $O(\log(1/\delta))$, and failure probability at most $\delta$.
\end{theorem}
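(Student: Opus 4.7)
The plan is to verify each of the four claims in turn, dispatching the easy ones first and then tackling the failure probability, which is the main obstacle. For the space bound, I would simply count: the structure has $O(\log \veclength)$ rows and $q\log(1/\delta)$ columns of buckets, each storing $\alpha$ and $\gamma$ of $O(\log \veclength)$ bits apiece (to uniquely identify an index in $[\veclength]$ and hold a checksum of polynomial range), yielding $O(\log^2(\veclength)\log(1/\delta))$ bits total. For update complexity, each column's inner \textbf{while} loop terminates once the column hash bit is nonzero. In the worst case this traverses every row, giving $O(\log \veclength)$ work per column and $O(\log(\veclength)\log(1/\delta))$ overall. For the average case, I would use the uniformity of \texttt{hash$_1$}: the probability that the first $i$ bits are all zero is $2^{-i}$, so the expected number of rows touched per column is $\sum_{i \geq 0} 2^{-i} = O(1)$, giving $O(\log(1/\delta))$ expected work per update.

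Linearity follows from the fact that the aggregation operator is XOR: equations (\ref{eq:1}) and (\ref{eq:2}) show that both $\alpha_{i,j}$ and $\gamma_{i,j}$ are the XOR of contributions from each toggled index, so the sketch of a sum (over $\mathbb{Z}_2$) equals the XOR of sketches.

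The main work is bounding the failure probability. I would decompose failure into two events: (a) no bucket in any column is \emph{good} (i.e.\ has exactly one nonzero entry hashed into it), and (b) some \emph{bad} bucket passes the query check and returns a spurious answer. For (a), I would use a standard geometric-level argument: fix the support $T \subseteq [\veclength]$ of the current vector with $|T|=s$. Row $i$ of a column includes each index with probability $2^{-i}$, so at level $i^\star = \lceil \log s \rceil$ the expected number of indices from $T$ is $\Theta(1)$. Using pairwise independence of \texttt{hash$_1$} and a short second-moment / inclusion-exclusion calculation, the probability that \emph{exactly one} index of $T$ lands in row $i^\star$ is a constant $p>0$. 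Because the $q\log(1/\delta)$ columns use independent hashes, the probability that no column yields a good bucket is at most $(1-p)^{q\log(1/\delta)} \leq \delta/2$ for a suitable $q$.

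For (b), a bad bucket returns a ``success'' only if $\gamma_{i,j}=h_2(\alpha_{i,j})$ despite $\alpha_{i,j}$ being the XOR of several $\bin(e_w)$ rather than a single one. By the 2-wise independence of $h_2$ and the polynomial-size range of $h_2$, the probability that $\gamma_{i,j}$ (itself the XOR of several hash images) equals $h_2$ of the ``wrong'' value $\alpha_{i,j}$ is at most $1/\poly(\veclength)$. Taking a union bound over the $O(\log(\veclength)\log(1/\delta))$ buckets, this event contributes at most $\delta/2$ to the total failure. The hard part is picking the polynomial degree of the $h_2$ range carefully so this union bound stays below $\delta/2$ while keeping $\gamma$ at $O(\log \veclength)$ bits, and verifying that the checksum event and the good-bucket event can be combined cleanly (the checksum must work simultaneously for every bucket that the query inspects, not just the target one). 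Adding the two failure contributions yields total failure at most $\delta$, completing the proof.
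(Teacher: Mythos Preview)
Your proposal is correct and follows essentially the same approach as the paper: counting buckets for space, geometric loop depth for update time, XOR for linearity, a subsampling-level argument (the paper fixes $i$ with $2^{i-2}\le \lVert x\rVert_0 < 2^{i-1}$ and obtains $p>1/8$) amplified across independent columns for event~(a), and 2-wise independence of $h_2$ with a $1/\veclength^c$ per-bucket false-positive rate for event~(b). If anything, your treatment of the combination of the two failure modes---an explicit union bound over all buckets for~(b) and a $\delta/2+\delta/2$ split---is slightly more careful than the paper, which states the two lemmas and asserts that together they yield failure $\le \delta$ without spelling out the union bound.
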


\begin{proof}
The space and update time results follow by construction: each bucket $\indexsubset_{i,j}$ requires a constant number of machine words, and $i \in [O(\log(\veclength)])$ and $j \in [O(\log(1/\delta)]$. Applying an update to any bucket $\indexsubset_{i,j}$ requires constant time, and in the worst case, an update will be applied to each of the $O(\log(\veclength)\log(1/\delta))$ buckets. In the average case an update is applied to $O(\log(1/\delta))$ buckets.

\begin{lemma}
\label{lem:goodset}
\sketchname's selection process succeeds with probability at least $1 - \delta$.  Equivalently, \sketchname contains a bucket $\indexsubset_{i,j}$ with a single nonzero entry, that is, $\prob{\exists i,j \text{ s.t. }\support{\indexsubset_{i,j}} = 1} \geq 1 - \delta.$
\end{lemma}

\begin{proof}
Adapted from~\cite{l0sketch}.  Choose $i \in [\log(n)]$ such that $2^{i-2} \leq \lVert x \rVert_0 < 2^{i-1}$ where $\lVert x \rVert_0$ denotes the $\ell_0$ norm of $x$, \ie the number of nonzero entries of $x$.  Let $A_x$ be the set of positions of nonzero entries in $x$. Then, since $h_1$ is drawn from a 2-universal family of hash functions, $\forall j \in [6\log(1/\delta)]$,
\begin{align*}
    \prob{\support{\indexsubset_{i,j} = 1}} &= \sum_{k \in A_x} \frac{1}{2^i}\left(1-\frac{1}{2^i}\right)^{\lVert x \rVert_0-1}\\
    &> \frac{\lVert x \rVert_0}{2^i}\left(1-\frac{\lVert x \rVert_0}{2^i}\right) > 1/8.
\end{align*}
Then $\prob{\support{\indexsubset_{i,j} \neq 1} \forall j \in [6\log(1/\delta)]} < (1 - 1/8)^{6\log(1/\delta)} = (7/8)^{6\log_{7/8}(1/\delta)/ \log_{7/8}(2)} = \delta^{-6/\log_{7/8}(2)} < \delta$.
\end{proof}

\begin{lemma}
\label{lem:goodcheck}
\sketchname's checksum succeeds with high probability.  That is, $\forall w,y,$ if$ \support{\indexsubset_{w,y}} = 1$ then $\gamma_{w,y} = h_2(\alpha_{w,y})$ and if $\support{\indexsubset_{w,y}} > 1$  then $\prob{\gamma_{w,y} \neq h_2(\alpha_{w,y})} \geq 1 - 1/\veclength^c$ for some constant c.

\end{lemma}

\begin{proof}
When $\indexsubset_{i,j}$ has a single nonzero entry, it always passes the error check.  That is, if $\support{\indexsubset_{i,j}} = 1$, $\alpha_{w,y} = \bin(e_i)$ where $e_i$ is the single nonzero element of $\indexsubset_{i,j}$, and $\gamma_{w,y} = h_2(\bin(e_i))$. 

When $\indexsubset_{i,j}$ has more than one nonzero entry, then it passes the error check only in the rare event of a hash collision:
If $\support{\indexsubset_{i,j}} > 1$, fix $e_i \in \indexsubset_{i,j}$. By equations (\ref{eq:1}) and (\ref{eq:2}), $\gamma_{w,y} = h_2(\alpha_{w,y})$ iff $\bigoplus_{j \in \indexsubset_{i,j}\setminus e_i} h_2(\bin(j)) \oplus h_2(\bin(e_i)) = h_2(\alpha_{w,y})$. Since $h_2$ is a 2-wise independent hash function, assuming that $\gamma_{i,j}$ is $c\log(\veclength)$ bits:

$$\prob{h_2(\bin(e_i)) = \left(\bigoplus_{j \in \indexsubset_{i,j}\setminus e_i} h_2(\bin(j)) \right) \oplus h_2(\alpha_{w,y})} = \frac{1}{2^{c\log(\veclength)}} = \frac{1}{\veclength^c}.$$
\end{proof}

Lemmas ~\ref{lem:goodset} and ~\ref{lem:goodcheck} imply that \sketchname is sampleable with probability $1 - \delta$ (see Definition ~\ref{def:l0}).  
\sketchname may be added via elementwise $\bigoplus$ (exclusive or). Linearity of \sketchname follows from the observation that exclusive or is a linear operation.
\end{proof}

Figure ~\ref{fig:l0speedup} illustrates that \sketchname is far faster than the standard $\ell_0$-sampling algorithm.  In fact, when sketching characteristic vectors of graphs with at least $10^5$ nodes, it is more than 3 orders of magnitude faster.  This dramatic speedup is a result both of \sketchname's asymptotically lower update time complexity, and the fact that its update cost is dominated by bitwise exclusive OR operations, which are in practice much faster than the division operations standard $\ell_0$-sampling performs. Finally, standard $\ell_0$ sampling is slowed significantly by the need to perform $O(\log(\nodesize)\log(1/\delta)$ modular exponentiation operations on 128-bit integers for each update when $\nodesize \geq 10^5$. \sketchname does not require 128-bit operations until processing graphs with tens of billions of nodes.

Figure ~\ref{fig:l0size} shows that, for the same input vector length and failure probability, \sketchname is twice as small as standard $\ell_0$ sampling for smaller vectors, and four times smaller for larger vectors.  This is a result of the fact that \sketchname's bucket data structures use half the machine words of standard $\ell_0$ sampling, and the fact that \sketchname does not need to use 128-bit integers for longer vectors.

\section{Buffering for I/O efficiency and improved parallelism}
\label{sec:io}
In the streaming connectivity problem, stream updates are \emph{fine-grained}: each update represents the insertion or deletion of a single edge.  Since streams are ordered arbitrarily, even a short sequence of stream updates can be highly non-local, inducing  changes throughout the graph.  As a result, \algname and similar graph streaming algorithms do not have good data locality in the worst case. This lack of locality can cause many CPU cache misses and therefore reduce the ingestion rate, even when sketches are stored in RAM. The cache-miss cost can be high since ingesting each stream update $(u,v,\Delta)$ requires modifying a logarithmic number of sketches, and can thus induce a poly-logarithmic number of cache misses.  The consequences are even worse if sketches are stored on disk since each edge update requires loading a logarithmic number of sketches from disk, leading to the following observation.

\begin{observation}
In the hybrid semi-streaming model with $\memsize = o(\nodesize \log^3(\nodesize))$ RAM and $D = \Omega(\nodesize \log^3(\nodesize))$ disk, \algname uses $\Omega(1)$ I/Os per update and processing the entire stream of length $\streamlength$ uses $\Omega(\streamlength) = \Omega(\edgesize)$ I/Os.
\end{observation}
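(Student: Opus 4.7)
The plan is to exhibit an adaptive adversarial edge stream that forces \algname into $\Omega(1)$ I/Os per update. First I would do a simple bookkeeping argument: \algname maintains one node sketch per vertex, each of size $\Theta(\log^3 \nodesize)$ bits, so the whole data structure occupies $\Theta(\nodesize \log^3 \nodesize)$ bits. Since $\memsize = o(\nodesize \log^3 \nodesize)$, at any instant RAM can hold the sketches of only $o(\nodesize)$ distinct vertices, leaving sketches of $\nodesize - o(\nodesize) = \Omega(\nodesize)$ vertices resident only on disk.

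Next I would define the adversary. At each step the adversary inspects which node sketches currently reside in RAM and issues an update $((u,v), \Delta)$ whose endpoint $u$ has its sketch on disk. Such a $u$ exists by the pigeonhole count above, and a valid partner $v$ can always be chosen by alternating insertions and deletions so that no invalid repeat-insert or missing-delete is produced; fresh partners are also abundant whenever the graph is still below saturation. Applying the update forces \algname to modify the on-disk sketch of $u$, which in turn requires fetching at least one block of $\blocksize$ words into RAM, paying $\Omega(1)$ I/Os per update.

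Summing over $\streamlength$ updates would yield a total I/O cost of $\Omega(\streamlength)$. To close the second half of the observation I would note that $\streamlength \geq \edgesize$ trivially, since every currently-present edge was inserted at some prior step, so $\Omega(\streamlength) = \Omega(\edgesize)$.

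The main obstacle I anticipate is ruling out amortized caching schemes that might mask the per-update cost: a clever algorithm could retain frequently-touched sketches and serve some updates for free. I would handle this by stressing that the adversary is \emph{adaptive}, so no matter which $o(\nodesize)$ sketches the caching policy chooses to keep, the very next update targets a vertex outside that set; hence no prefetching, eviction, or batching policy operating within $o(\nodesize \log^3 \nodesize)$ RAM can drive the worst-case per-update I/O cost below $\Omega(1)$.
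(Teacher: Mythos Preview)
Your argument is correct and, in fact, more careful than what the paper offers. In the paper this statement is presented as an \emph{observation} with no formal proof; the only justification is the sentence immediately preceding it, namely that ``each edge update requires loading a logarithmic number of sketches from disk.'' The paper implicitly treats \algname as applying each update immediately to the relevant node sketch, so that if the sketch is not in RAM it must be fetched, and leaves it at that.

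Your adaptive-adversary framing goes further: it closes the gap the paper leaves open by explicitly ruling out any caching or prefetching policy that \algname might employ within its $o(\nodesize\log^3\nodesize)$ RAM budget. The pigeonhole count (at most $o(\nodesize)$ sketches can be resident) plus the adaptive choice of an endpoint whose sketch is on disk is exactly the right mechanism, and your remark that a valid partner $v$ can always be found by alternating insert/delete is enough to keep the stream legal. The final step $\streamlength\ge\edgesize$ is immediate. So your proof is sound, and the difference from the paper is simply that the paper asserts the observation informally while you supply an actual worst-case construction.
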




Any sketching algorithm that scales out of core suffers severe performance degradation unless it amortizes the per-update overhead of accessing disk. Such an amortization is not straightforward, since sketching inherently makes use of hashing and as a result induces many random accesses, which are slow on persistent storage.  We now introduce a sketching algorithm for the streaming connected components problem that amortizes disk access costs, even on adversarial graph streams, and as a result is simultaneously a space-efficient graph semi-streaming algorithm and an I/O-efficient external-memory algorithm.  We also note that the design facilitates parallelism, which we experimentally verify in Section~\ref{sec:experiments}.

\subsection{I/O-Efficient Stream Ingestion}
\label{sec:gutter-tree-theory}

We describe \sysname's I/O efficient stream ingestion procedure in the hybrid streaming model (see Section \ref{subsec:models}).

Arbitrarily partition the nodes of the graph into \defn{node groups} of cardinality $\max\{1, \blocksize/\log^3(\nodesize)\}$.  Let $\nodegroup \subset \nodes$ denote a node group, and let $\sketch(\nodegroup)$ denote the node sketches associated with the nodes in $\nodegroup$. Store $\sketch(\nodegroup)$ contiguously on disk.  This allows $\sketch(\nodegroup)$ to be read into memory I/O efficiently: if node groups are of cardinality 1, then $\blocksize$ is smaller than the size of a node sketch, and if each node group has cardinality $\blocksize/\log^3(\nodesize) > 1$, then the sketches for the group have total size $O(\blocksize)$.

Applying stream update $((u,v),\Delta)$ to node sketches of $u$ and $v$ immediately upon  arrival takes $\Omega(1)$ I/Os since the corresponding sketches must be read from disk.  To amortize the cost of fetching sketches, \sysname only fetches $\sketch(\nodegroup_i)$ when it has collected $\max\{\blocksize, \log^3(\nodesize)\}$ updates for $\nodegroup_i$.  Since there may be $O(\nodesize)$ node groups, collecting these updates for each node group cannot be done in RAM. Instead, we collect these updates I/O efficiently on disk using a \defn{\treename}, a simplified version of a buffer tree~\cite{Arge95thebuffer} which uses $O(\nodesize(\log^3(\nodesize))$ space.

Like a buffer tree, a \treename consists of a tree whose vertices each have buffers of size $O(\memsize)$. Each non-leaf vertex has $O(\memsize/\blocksize)$ children.  We refer to a leaf vertex of the \treename as a \defn{gutter}, because it fills with stream data but is periodically emptied by applying the contained stream data to sketches.  Each leaf vertex in the \treename is associated with a node group $\nodegroup$ and has size $\max\{\blocksize, \log^3(\nodesize)\}$, the same size as $\sketch(\nodegroup)$.  When a gutter for node group $\nodegroup$ fills, \sysname reads $\sketch(\nodegroup)$ and the updates stored in the gutter into memory, applies the updates to $\sketch(\nodegroup)$, and writes $\sketch(\nodegroup)$ back to disk. Since data does not persist in leaf vertices, no rebalancing is necessary.

\begin{lemma}
\sysname's stream ingestion uses $O(\nodesize \log^3(\nodesize))$ space and $sort(\streamlength) = O(\streamlength/\blocksize(\log_{M/B}(\nodesize/\blocksize)))$ I/Os in the hybrid streaming setting.
\end{lemma}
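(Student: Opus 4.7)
The plan is to split the analysis into a space bound and an I/O bound, and within the I/O bound, to separately account for the cost of routing updates through the \treename and the cost of periodically flushing gutters into node-group sketches.

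For the space bound, I would first count the sketches: there are at most $\nodesize$ node sketches, each of size $O(\log^3(\nodesize))$, contributing $O(\nodesize \log^3(\nodesize))$. For the \treename itself, there are $\lceil \nodesize / \max\{1, \blocksize/\log^3(\nodesize)\} \rceil = O(\nodesize)$ node groups, hence $O(\nodesize)$ gutters, each of size $\max\{\blocksize, \log^3(\nodesize)\}$; this contributes $O(\nodesize \log^3(\nodesize))$. The internal vertices of the tree have fanout $\Theta(\memsize/\blocksize)$ and buffers of size $O(\memsize)$, so the number of internal vertices is geometric in the depth, with total buffer space dominated by the leaf layer. All three contributions are $O(\nodesize \log^3(\nodesize))$.

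For the I/O bound, I would first apply the standard buffer-tree amortization to the routing cost. Every update enters the root buffer and is eventually pushed into a leaf; because each push moves $\Theta(\memsize)$ elements one level down using $\Theta(\memsize/\blocksize)$ I/Os, each update pays $O(1/\blocksize)$ amortized I/Os per level, and the tree has depth $O(\log_{\memsize/\blocksize}(\nodesize/\blocksize))$. Summed over the entire stream this gives $O(\text{sort}(\streamlength))$ I/Os. Then I would separately account for the sketch flushes: each time a gutter for $\nodegroup$ fills, we read $\sketch(\nodegroup)$ and the gutter's $\max\{\blocksize, \log^3(\nodesize)\}$ updates into RAM, apply them, and write $\sketch(\nodegroup)$ back. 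Because $\sketch(\nodegroup)$ and the gutter have matching sizes $\max\{\blocksize, \log^3(\nodesize)\}$, each flush costs $O(\max\{1, \log^3(\nodesize)/\blocksize\})$ I/Os and handles $\max\{\blocksize, \log^3(\nodesize)\}$ updates; the amortized per-update flush cost is therefore $O(1/\blocksize)$. Over the full stream this contributes $O(\streamlength/\blocksize)$, which is absorbed by $\text{sort}(\streamlength)$.

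The main obstacle I anticipate is the case split on whether $\blocksize \ge \log^3(\nodesize)$ or $\blocksize < \log^3(\nodesize)$. The choice of node-group cardinality $\max\{1, \blocksize/\log^3(\nodesize)\}$ and gutter size $\max\{\blocksize, \log^3(\nodesize)\}$ is specifically designed so that in both regimes a gutter and its associated sketch block occupy $\Theta$ of the same amount of space, so the per-update amortized flush cost cleanly reduces to $O(1/\blocksize)$. I would handle both cases uniformly by phrasing the argument in terms of these $\max$ expressions, and then verify at the end that the standard tall-cache assumption $\memsize = \Omega(\blocksize^2)$ ensures the buffer-tree layer above the gutters behaves as a proper buffer tree (fanout $\Omega(1)$, depth $O(\log_{\memsize/\blocksize}(\nodesize/\blocksize))$), so that the sort-complexity amortization applies without modification.
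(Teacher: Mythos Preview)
Your proposal is correct and follows essentially the same decomposition as the paper: bound the sketches, the leaf gutters, and the internal buffer-tree levels separately for space, then invoke the standard buffer-tree amortization for routing I/Os and argue that the sketch-flush cost is $O(1/\blocksize)$ amortized per update. If anything, your treatment of the flush cost and the $\blocksize$ vs.\ $\log^3(\nodesize)$ case split is more explicit than the paper's, which simply asserts that flushing ``asymptotically incurs no additional cost.''
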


\begin{proof}
\sysname's sketch data structures use $O(\nodesize \log^3(\nodesize))$ space.

Each leaf in the \treename has a gutter of size $\max\{\blocksize, \log^3(\nodesize)\}$.  This is one gutter for each node group and there are $\nodesize/ (\max\{1, \blocksize/\log^3(\nodesize)\})$ node groups so the total space for the leaves of the \treename is $O(\nodesize \log^3(\nodesize))$.

In the level above the leaves, there are $\nodesize \log^3(\nodesize)/\blocksize \cdot \blocksize/\memsize$ vertices each with size $\memsize$, so the total space used at this level is $O(\nodesize \log^3(\nodesize))$.  Each subsequent higher level of the tree uses $O(\memsize/\blocksize)$ space less than the level below it, so the total space used for the entire \treename is $O(\nodesize \log^3(\nodesize))$.

The I/O complexity of the \treename is equivalent to that of the buffer tree, except that leaf gutters are flushed by reading in the appropriate sketches from disk and applying the updates in the gutter to these sketches. Asymptotically this incurs no additional cost so the total I/O complexity for ingestion is $sort(\streamlength)$.
\end{proof}

\subsection{I/O-Efficient Connectivity Computation}

\begin{lemma}
Once all stream updates have been processed, \sysname computes connected components using $O((\nodesize \log^3(\nodesize)/\blocksize)  + (\nodesize \log^*(\nodesize)) $ I/Os in the hybrid streaming model.
\end{lemma}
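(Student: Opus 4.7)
I would emulate Boruvka's algorithm on the sketches across the standard $O(\log \nodesize)$ rounds, relying on the fact that each round at least halves the number of active components (every component samples one outgoing edge, and each resulting merge reduces the component count by at least one). Consequently the working set of live sketches shrinks geometrically: at the start of round $r$ there are at most $\nodesize / 2^{r-1}$ components, each represented by a single merged node-sketch of $O(\log^3 \nodesize)$ bits, for a total working-set size of $O(\nodesize \log^3(\nodesize) / 2^{r-1})$ bits.

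Round $r$ then consists of (i) scanning the round-$r$ \sketchname of each active component in component order and querying each for an outgoing edge, (ii) feeding the sampled edges into a union--find structure to determine which components merge, and (iii) producing the next round's component sketches by XOR-ing the remaining \sketchnames of the merging components. With component sketches laid out contiguously, step (i) costs $O(\nodesize \log^3(\nodesize) / (2^{r-1} \blocksize))$ I/Os. Summing over the $O(\log \nodesize)$ rounds gives a geometric series dominated by the first round:
\[
\sum_{r=1}^{O(\log \nodesize)} \frac{\nodesize \log^3(\nodesize)}{2^{r-1}\, \blocksize} \;=\; O\!\left(\frac{\nodesize \log^3(\nodesize)}{\blocksize}\right).
\]
The union--find structure over $\nodesize$ elements does not fit in RAM in the hybrid model; across all rounds it performs a total of $O(\nodesize)$ find/union operations, contributing $O(\nodesize \log^*(\nodesize))$ I/Os by the classical external-memory analysis of Chiang et al.~\cite{ChiangGoGr95}.

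The main obstacle is step (iii): naively regrouping each component's remaining \sketchnames by its new component identifier would require random disk access. I would handle this by sorting the sampled edges (equivalently, the merge map) once per round and then streaming through the live sketches in the matching order, so that each disk block is touched only a constant number of times. The resulting $\sort(\cdot)$ cost is bounded by the round's working-set size and is therefore subsumed by the scan term, preserving the claimed $O(\nodesize \log^3(\nodesize)/\blocksize + \nodesize \log^*(\nodesize))$ I/O bound.
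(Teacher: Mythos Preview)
Your decomposition into the three per-round phases mirrors the paper's, and your handling of the scan (step~(i)) and the union--find (step~(ii)) matches. The gap is in step~(iii).

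Sorting the \emph{merge map} is indeed cheap: it involves only $O(\nodesize/2^{r-1})$ words, and its $\sort$ cost is dominated by the round's sketch-scan cost. But having the merge map in sorted order does not by itself let you ``stream through the live sketches in the matching order'' touching each block $O(1)$ times. The merge map pairs arbitrary component identifiers, so to XOR sketch $i$ into sketch $j$ you must access two potentially far-apart disk locations; realizing all the merges as a sequential pass requires permuting the \emph{sketches themselves} by their new component identifiers, which is a sort on $\Theta(\nodesize\log^3(\nodesize)/2^{r-1})$ bits of data, not on the merge map. Over all rounds that costs $\sort(\nodesize\log^3(\nodesize))$, a $\log_{\memsize/\blocksize}$ factor above the stated $O(\nodesize\log^3(\nodesize)/\blocksize)$ bound, so your claim that the sort is ``subsumed by the scan term'' does not hold as written.

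The paper sidesteps sorting entirely via a case split on $\blocksize$. If $\blocksize = O(\log^3(\nodesize))$, each node sketch already occupies at least one full block, so reading two sketches and writing back their XOR is block-aligned; with at most $\nodesize-1$ merges over all rounds, phase~(iii) costs $O(\nodesize\log^3(\nodesize)/\blocksize)$. If $\blocksize = \omega(\log^3(\nodesize))$, sketches are sub-block and the paper simply pays $O(1)$ random I/Os per merge, for $O(\nodesize)$ total, which is absorbed by the $O(\nodesize\log^*(\nodesize))$ DSU term. This case analysis, not a sort-and-scan, is what makes the claimed bound go through.
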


\begin{proof}
Each round of Boruvka's algorithm has three phases.  In the first, an edge is recovered from the sketch of each current connected component.  In the second, for each edge its endpoints are merged in a disjoint set union data structure which keeps track of the current connected components.  In the third phase, for each pair of connected components merged in phase 2, the corresponding sketches are summed together.  We analyze the I/O cost of each phase of a round separately.

In the first round, to query the sketches in the first phase, all of the sketches must be read into RAM which can be done with a single scan. This uses $O(\nodesize \log^3(\nodesize)/\blocksize)$ I/Os.

The disjoint set union data structure has size $\O(\nodesize)$ and must be stored on disk. In the second phase the cost of each DSU merge is $\log^*(\nodesize)$ I/Os, so the total I/O cost is $\nodesize\log^*(\nodesize)$.

In the third phase, summing the sketches of the merged components together is I/O efficient if $\blocksize = O(\log^3(\nodesize))$, since the disk reads and writes necessary for summing sketches are the size of a block or larger. The cost for the third phase is $\O(\nodesize \log^3(\nodesize)/\blocksize)$.  

If $\blocksize = \omega(\log^3(\nodesize))$, sketches are much smaller than the block size. Since the merges performed in each round of Boruvka are a function both of the input stream and of the randomness of the sketches, these merges induce random accesses to the sketches on disk and so summing the sketches for each merge takes $O(1)$ I/Os. In total, the third phase takes $O(\nodesize)$ I/Os in this case.
\end{proof}

\begin{corollary}
When $\edgesize = \Omega(\nodesize \log^3(\nodesize))$ and $\blocksize = o(\log^3(\nodesize))$ or $\memsize = O(\nodesize)$, \sysname is I/O optimal for the connected components problem; \ie it uses $sort(\edgesize) = O(\edgesize/\blocksize(\log_{M/B}(\nodesize/\blocksize)))$ I/Os.
\end{corollary}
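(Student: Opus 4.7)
The plan is to add the I/O costs from the two preceding lemmas and then show that, under the hypotheses of the corollary, the sum is absorbed into $\sort(\edgesize)$, which in turn matches the external-memory connected-components lower bound of~\cite{ChiangGoGr95}.

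First, I would write the total I/O cost as the sum of the ingestion cost $\sort(\streamlength)$ and the post-stream computation cost $O(\nodesize \log^3(\nodesize)/\blocksize + \nodesize \log^*(\nodesize))$. Since each insertion or deletion in $\graphstream$ touches an edge of $\graph$, $\streamlength = \Theta(\edgesize)$ in the regime of interest, so $\sort(\streamlength) = \Theta(\sort(\edgesize))$. The hypothesis $\edgesize = \Omega(\nodesize \log^3(\nodesize))$ then immediately gives
\[
\frac{\nodesize \log^3(\nodesize)}{\blocksize} \;\leq\; \frac{\edgesize}{\blocksize} \;\leq\; \sort(\edgesize),
\]
so the sketch-summing term from the connected-components lemma is absorbed.

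The main obstacle is showing that the disjoint-set-union cost $\nodesize \log^*(\nodesize)$ is also $O(\sort(\edgesize))$; I would dispatch this by a case split on the disjunctive hypothesis. If $\blocksize = o(\log^3(\nodesize))$, then $\sort(\edgesize) \geq \edgesize/\blocksize = \Omega(\nodesize \log^3(\nodesize)/\blocksize)$, which is $\omega(\nodesize)$ and therefore dominates $\nodesize \log^*(\nodesize)$. In the alternative case $\memsize = O(\nodesize)$, I would invoke the tall-cache assumption $\memsize = \Omega(\blocksize^2)$ and show that, together with $\edgesize = \Omega(\nodesize \log^3(\nodesize))$, the factor $\log_{\memsize/\blocksize}(\nodesize/\blocksize)$ appearing in $\sort(\edgesize)$ provides enough slack to swallow the $\log^*(\nodesize)$ overhead.

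Finally, I would cite the $\Omega(\sort(\edgesize))$ lower bound of Chiang \etal~\cite{ChiangGoGr95} for computing connected components in external memory to conclude that \sysname's total I/O cost matches this bound and is therefore I/O-optimal. No new algorithmic ideas are required beyond the two lemmas; the corollary reduces to a bookkeeping argument verifying that the slack in $\sort(\edgesize)$ absorbs both lower-order terms under the stated regime.
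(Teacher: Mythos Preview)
The paper states this corollary without proof, treating it as an immediate consequence of the two preceding lemmas. Your plan---summing the ingestion cost $\sort(N)$ with the post-processing cost $O(V\log^3 V/B + V\log^* V)$, arguing each term is absorbed into $\sort(E)$, and then citing the Chiang \etal lower bound---is exactly the right structure and is more explicit than anything the paper provides.

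Two points deserve care. First, your justification for $N = \Theta(E)$ is not valid: the stream may insert and delete the same edge arbitrarily many times, so only $N \geq E$ follows from the observation that every update touches an edge. The paper itself silently identifies $\sort(N)$ with $\sort(E)$ in the corollary, so this should be recorded as an assumption (or the bound stated as $\sort(N)$) rather than derived.

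Second, your handling of the $M = O(V)$ branch is a genuine gap, and it is not clear the paper's claim is even tight there. The tall-cache assumption yields only $B = O(\sqrt{M}) = O(\sqrt{V})$, and with $M = O(V)$ the factor $\log_{M/B}(V/B)$ is $\Omega(1)$ but need not be more. Taking, for instance, $B = V^{0.2}$ and $M = V^{0.5}$ gives $\sort(E) = \Theta(V^{0.8}\log^3 V)$, which does \emph{not} dominate the DSU cost $V\log^* V$. So the hypothesis $M = O(V)$ alone, even with tall cache, does not force $V\log^* V = O(\sort(E))$. The paper does not address this either; the looseness is in the statement itself. You should either identify the additional restriction on $B$ (essentially $B = O(\log^3 V / \log^* V)$, which is already covered by the first disjunct up to a $\log^* V$ factor) or flag the $M = O(V)$ branch as imprecise.
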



\newtext{
Note that for optimality the graph cannot be too sparse. In practice, for some graph streams $\memsize = O(\nodesize\blocksize)$ and $D = O(\nodesize \log^3(\nodesize))$.  In this case, we can omit the upper levels of the \treename and write I/O efficiently to the leaf gutters stored on disk.  In Section ~\ref{sec:system} we describe how \sysname can perform stream ingestion using either a full \treename or just the leaf gutters, and evaluate the performance of both approaches in Section ~\ref{sec:experiments}.
}


\section{System Description}
\label{sec:system}

\begin{figure}[!t]
  \centering
  \includegraphics[width=.5\textwidth]{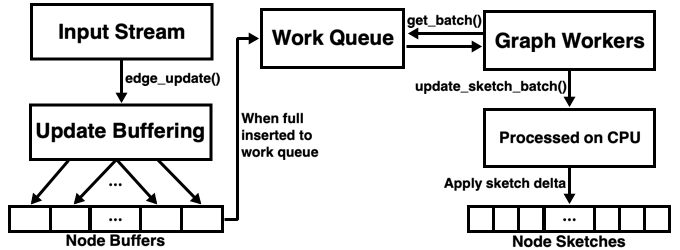}
\setlength{\belowcaptionskip}{-12pt} 
    \caption{\textbf{\sysname stream ingestion data flow.} }
    \label{fig:system}
\end{figure}

The \sysname algorithm is split into two components: \defn{stream ingestion}, in which edge updates are processed and stored using \sketchname, and \defn{query-processing}, in which a spanning forest for the graph is recovered from these sketches.  These components use SSD when the sketches are so large that they do not fit in RAM.  Their implementations are parallel for better performance on multi-core systems.

\begin{figure}[!t]
  \input{figures/api}
  \caption{Pseudocode for \sysname's core stream ingestion routines. \edgeupdate{} is part of the user API, while \textsc{do\_batch\_update()}, and \textsc{update\_sketch\_batch()} are internal functions.}
  \label{fig:api}
\end{figure}
\begin{figure}[!t]
  \input{figures/api2}
  
\setlength{\belowcaptionskip}{-16pt} 
  \caption{Pseudocode for \sysname's core post-processing routines. \listspanningforest{} is part of the user API, while \textsc{cleanup()} is an internal function.}
  \label{fig:api2}
\end{figure}

\sysname's user-facing API consists of \edgeupdate{} for processing stream updates, and \listspanningforest{} to compute and return the connected components.   On initialization \sysname allocates $\log(\nodesize)$ \sketchname data structures for each node in the graph, for a total sketch size of approximately $280\nodesize\cdot\log^2(\nodesize)$ bytes. It also initializes its buffering data structure.

\subsection{Stream Ingestion}  
Each update in the input stream is immediately placed into a buffering system.  Periodically, the buffering system produces a \batch{} of updates bound for the same graph node $u$.  This batch is inserted into a work queue, which then hands the batch off to a \defn{Graph Worker}, i.e., 
a thread for carrying out batched sketch updating.  Because each batch is only applied to a single node sketch, and because each of the $\log(\nodesize)$ $\sketchname$es in a node sketch can be updated in parallel, many Graph Workers can operate in parallel without contention (see Section \ref{subsec:multithreading}).
A high-level illustration and pseudo code of \sysname stream ingestion are shown in Figure~\ref{fig:system} and Figure~\ref{fig:api} respectively.




\paragraph{Buffering.}
\sysname's buffering system ingests updates from the stream and periodically outputs a batch of updates for a single node in the graph. 
\sysname implements two buffering data structures: a \treename{}, described in Section~\ref{sec:io}, and a simplified version of the \treename, which only includes the leaves. Depending upon available memory, \sysname uses only one of these two buffering structures at any time. 
\newtext{
The leaf-only version is fundamentally a special case gutter tree used when sufficient memory is available $(\memsize > \nodesize\cdot\blocksize)$ and is optimized for this case.
}

These buffering techniques confer several benefits.
\newtext{First, when \sysname's sketches are so large that they do not fit in RAM and are stored on SSD, applying updates to a single node sketch in large batches amortizes the I/O cost of reading the node sketch into memory. Without buffering, each stream update would incur $\Omega(1)$ I/Os in the worst case. 
We demonstrate in Section ~\ref{sec:multi-thread-exp} that buffering facilitates I/O efficiency and parallelism.}




\subparagraph{Gutter tree.}
\sysname allocates $8\textsc{MB}$ for each non-leaf buffer in the \treename. 
The \treename writes updates to the disk in blocks of $16 \textsc{KB}$, and has a fan-out of $\frac{8\textsc{MB}}{16\textsc{KB}}=512$.
\newtext{
A write block of $16 \textsc{KB}$ is an efficient I/O granularity for SSDs and a buffer size of $8\textsc{MB}$ balances buffering performance with the latency of flushing updates through the \treename{}.
}
When $\nodesize > 5 \cdot 10^4$, the size of a sketch is greater than $100 \textsc{KB}$, much larger than the $16\textsc{KB}$ block.
Therefore, the leaf nodes of the \treename{} accumulate updates for a single graph node. \sysname allocates space for each leaf gutter equal to twice the size of a node sketch.


When we initialize \sysname{}, we leverage the static structure of the \treename to pre-allocate its disk space. 
A call to \bufferinsert{$\{u,v\}$} inserts $\{u,v\}$ to the root buffer of the \treename.
Another thread asynchronously flushes the contents of full buffers to the appropriate child using the \emph{pwrite} system call.
When a flush causes the buffer of a child node to fill, that child node is recursively flushed before the flush of the parent continues. 
When a leaf gutter is full this thread moves the \batch{} of updates into the work queue for processing by Graph Workers in \dobatchupdate{}.
\vspace{-.5em}

\subparagraph{Leaf-only gutter tree.} 
For each graph node $u$ we maintain a gutter that accumulates updates for $u$.
When the system is initialized, we allocate the memory for each of these gutters.
By default, each leaf gutter is 1/2 the size of a node sketch. 
\newtext{
This choice balances RAM usage with I/O efficiency as shown in Section~\ref{sec:multi-thread-exp}.
}

\bufferinsert{$(u,v)$} inserts edge $e=(u,v)$ directly into the gutter for node $u$. 
As before, when the gutter becomes full, it is flushed and the batch is inserted into the work queue.

Note that the leaf-only gutter data structure need not fit entirely in RAM, so long as at least a page of memory is available per buffer the rest can be efficiently swapped to SSD; see Section~\ref{sec:experiments}.

\paragraph{Work queue.}
The work queue functions as a simple solution for the producer-consumer problem, in which the thread filling buffers produces work and the Graph Workers consume it.
Once a buffer is filled the \textsc{buffer\_insert}() function inserts the \batch{} of updates into the work queue.  
Later a Graph Worker removes the \batch{} from the front of the queue in \dobatchupdate{}.

Insertions to the queue are blocked while the queue is full, and Graph Workers in need of work are blocked while the queue is empty. The work queue can hold up to $8g$ batches, where $g$ is the number of Graph Workers. 
\newtext{
A moderate work queue capacity of $8g$ limits the time either the buffering system or graph workers spend waiting on the queue, even when batch creation is volatile, while keeping the memory usage of the work queue low.
}



\paragraph{Sketch updates}.
In each call to \dobatchupdate{}, Graph Workers call \getbatch{} to receive a batch of updates bound for a particular node $u$ from the work queue. The Graph Worker then uses \sketchbatch{sketch$_u$, batch} to update each of the $O(\log(\nodesize))$ \sketchnames in the node sketch of $u$.


As described in Section~\ref{subsec:bettersketch}, a \sketchname is a vector of buckets, each of which consists of a 64 bit $\alpha$ value and a 32 bit $\gamma$ value.
Each \sketchname stores a two dimensional array $A$ of buckets $\indexsubset_{i,j}$, with dimensions $\log(\nodesize^2) \times (\log(1/\delta) = 7)$.
To apply an update $(e=\{u,v\})$ to a \sketchname, the Graph Worker determines which buckets $\indexsubset_{i,j}$ contain $e$, and sets $\alpha_{i,j}:=\alpha_{i,j}\oplus e$ and $\gamma_{i,j}:=\gamma_{i,j}\oplus h_y(e)$. The hash values are calculated using xxHash~\cite{xxhash}.

Each \sketchname data structure uses $7\log(\nodesize^2) = 14\log(\nodesize)$ 12B buckets.
In total, this is $168\log(\nodesize)$ bytes per \sketchname, and $168\log(\nodesize)\log_{2/3}(\nodesize)$ bytes per node sketch.



\paragraph{Multithreading sketch updates}.
\label{subsec:multithreading}
Applying a \batch to a node sketch in \dobatchupdate{} is handled asynchronously by a Graph Worker, allowing what we call \defn{\batch{}-level parallelism}. We implement these workers using \verb|C++| STL threads.

We use \verb|OpenMP|~\cite{ompAPI} to dispatch a group of threads to process each \sketchname update in \sketchbatch{}. We refer to this as \defn{sketch-level parallelism}. \verb|OpenMP| allows us to specify the number of threads to allocate to a task and handles work allocation transparently. When updating a node sketch, applying a \batch{} to each \sketchname is treated as one work unit and \verb|OpenMP| allocates the $\log(V)$ units between the apportioned threads.

Implementing both \batch{}- and sketch-level parallelism gives us a natural way to tune \sysname's performance. For instance, we can decide to configure more Graph Workers with fewer threads per group, or fewer Graph Workers with more threads per group. 
We experimentally determine a good configuration for our hardware and datasets (see Section~\ref{sec:multi-thread-exp}).

A single work unit is never shared between threads in the same group. As a result, a \sketchname is only modified by one thread in a group, so no locking is necessary at the sketch level. However, locking is necessary at the \batch{} level because consecutive \batch{} updates may be requested to the same node sketch, and thus multiple graph workers may seek to dispatch thread groups to the same sub-sketches. We minimize the size of this critical section by exploiting linearity of $\ell_0$-samplers. Rather than locking a node sketch $S(x)$ for the entire batch operation, we apply the updates to an empty sketch $S(x_0)$ and lock only to add $S(x) = S(x) + S(x_0)$.
\subsection{Query Processing} 

When a connectivity query is issued, \sysname calls \listspanningforest{} which returns a spanning forest of the graph. The first step of post-processing is to flush the buffering data structure of any remaining updates, moving the \batch{}es to the work queue in \cleanup{}. We then wait for the Graph Workers to finish processing these \batch{}es. Finally, \sysname runs Boruvka's algorithm to generate a spanning forest of the input graph.

\section{Evaluation}
\label{sec:experiments}

\paragraph{Experimental setup.}
We implemented \sysname as a C++14 executable compiled with g++ version 9.3 for Ubuntu. All experiments were run on a Dell Precision 7820 with 24-core 2-way hyperthreaded Intel(R) Xeon(R) Gold 5220R CPU @ 2.20GHz, 64GB 4x16GB DDR4 2933MHz RDIMM ECC Memory and two 1 TB Samsung 870 EVO SSDs. In some of our experiments we artificially limited RAM to force systems to page to disk using Linux Control Groups. We put a swap partition and the gutter tree data on one of the two SSDs, and the other SSD held the datasets.

\subsection{Datasets}
\label{subsec:datasets}
We used two types of data sets in this paper.  First, we generated large, dense graphs using a Graph500 specification, and converted these to streams for our evaluation.  
We also evaluated correctness on graphs from the SNAP graph repository~\cite{snapnets} and the Network Repository~\cite{netrepo}. All data sets used are described in \tabref{datasets}.

\subparagraph{Synthesizing Dense Graphs and Streams} \label{sec:synth} We created undirected graphs using the Graph500 Kronecker generator.  We produced five simple, undirected graphs.  These graphs are dense: each has roughly one half of all possible edges. The Graph500 generator does not output simple graphs by default, so to produce our five simple graphs we pruned duplicate edges and self-loops~\cite{Ang2010IntroducingTG}.
 

\begin{figure}
\begin{center}
\def\arraystretch{1.1}
\begin{tabular}{ |c|c|c|c| } 
 \hline
 Name & \# of Nodes & \# of Edges & \# Stream Updates\\ 
 \hline
 \textbf{kron13} & $2^{13}$ & $1.7\times 10^7$ & $1.8\times 10^7$\\
 \hline
 \textbf{kron15} & $2^{15}$ & $2.7 \times 10^8$ & $2.8 \times 10^8$\\ 
 \hline
 \textbf{kron16} & $2^{16}$ & $1.1 \times 10^9$ & $1.1\times 10^9$\\ 
 \hline
 \textbf{kron17} & $2^{17}$ & $4.3 \times 10^9$ & $4.5\times 10^9$\\ 
 \hline
 \textbf{kron18} & $2^{18}$ & $1.7 \times 10^{10}$ & $1.8 \times 10^{10}$ \\ 
 \hline
 \textbf{p2p-gnutella} & $6.3\times 10^4$ & $1.5\times 10^5$ & $2.9 \times 10^5$ \\ 
 \hline
 \textbf{rec-amazon} & $9.2\times 10^4$ & $1.3\times 10^5$ & $2.5 \times 10^5$ \\ 
 \hline
 \textbf{google-plus} & $1.1 \times 10^5$ & $1.4 \times 10^7$ & $2.7\times 10^7$ \\ 
 \hline
 \textbf{web-uk} & $1.3 \times 10^5$ & $1.2 \times 10^7$ & $2.3\times 10^7$ \\  
 \hline
 
\end{tabular}
\end{center}
\setlength{\belowcaptionskip}{-16pt} 
\caption{Dimensions of datasets used in this evaluation. 
}
\label{tab:datasets}
\end{figure}

We then transformed each of the 5 graphs into a random stream of edge insertions and deletions with the following guarantees: (i) an insertion of edge $e$ always occurs before a deletion of $e$, (ii) an edge never receives two consecutive updates of the same type, (iii) we disconnect a small (fewer than 150) set of nodes from the rest of the graph, and (iv) by the end of the stream, exactly the input graph (with the exception of the edges removed to disconnect the vertices in (iii)) remains. Note that this mechanism deliberately adds edges not in the original graph, but they are always subsequently deleted. We implemented (iii) to guarantee some non-trivial connected components in each stream's final graph. 

\subparagraph{Publicly Available Datasets}
We also used the following real-world data sets. \textbf{p2p-gnutella} is a graph representing the Gnutella peer-to-peer network~\cite{ripeanu2002mapping}. \textbf{rec-amazon} is a co-purchase recommendation graph for products listed on Amazon~\cite{leskovec2007dynamics}, where each node represents a product and there is an edge between two nodes if their corresponding products are frequently purchased together. \textbf{google-plus} is a graph among users of the Google Plus social network~\cite{mcauley2012learning} where edges represent follower relations. \textbf{web-uk} is a web graph, where edges represent links between pages~\cite{netrepo}. Each of these real-world graphs was converted to a stream using the process described above.


\subsection{\sysname is Fast and Compact}

We now demonstrate that, given the same memory resources, \sysname can handle larger inputs than Aspen and Terrace on sufficiently large and dense graph streams. \newtext{We also show that unlike these systems, \sysname maintains good performance when its data structures are stored on SSD.}

Both Aspen and Terrace are optimized for the \defn{batch-parallel} model of dynamic graph processing. In this model, updates are applied to a non-empty graph in batches containing exclusively insertions or exclusively deletions. This contrasts with our streaming model, an initially empty graph is defined entirely from a stream of interspersed inserts and deletes. 
To avoid unfairly penalizing Aspen and Terrace, we group the input stream into batches insertions and deletions to these systems (ignoring any query correctness issues this may introduce) and present these batches as the input stream. 
Whenever one of these arrays fills, we feed it into the appropriate batch update function provided by Aspen or Terrace\footnote{Note that Terrace does not currently support batch deletions, so we rely on its individual edge deletion functionality instead and do not maintain a deletions array.}. 
We ran \sysname, Aspen, and Terrace on each Kronecker stream. We used a batch size of $10^6$ for Aspen and Terrace because we found this to produce the highest ingestion rates for both systems. To record memory usage we logged the output of the Linux top command tracking each system every five seconds. 
All experiments were run for a maximum of 24 hours.

\subparagraph{Memory profiling.}
\sysname's space-efficient \sketchnames make it a ${O}(\nodesize/ \log^3(\nodesize))$-factor smaller than Aspen or Terrace asymptotically.  Given the polylogarithmic factors and constants, this experiment determines the actual crossover point where \sysname is more compact than Aspen and Terrace.
As shown in Figure \ref{fig:size}, \sysname is smaller than Terrace even on kron15, and smaller than Aspen on kron17 and kron18. 


\begin{figure}[!t]
  \centering
  
  \begin{subfigure}[b]{0.5\textwidth}
    \centering
    \begin{tabular}{ |c|c|c|c| } 
      \hline
      Dataset & Aspen & Terrace & \sysname\\ 
      \hline
      kron13 & $0.000328$ & $0.000519$ & $0.58$\\ 
      \hline
      kron15 & $3.40$ & $6.30$ & $3.10$\\ 
      \hline
      kron16 & $6.40$ & $23.7$ & $7.00$\\ 
      \hline
      kron17 & $16.8$ & $96.0$* & $15.7$\\ 
      \hline
      kron18 & $57.7$ & N/A & $35.1$\\ 
      \hline
    \end{tabular}
    \caption{Space used by each system. All numbers listed in GiB. Terrace did not finish processing kron17 within 24 hours.}
    \label{tab:size}
  \end{subfigure}
  
  \hfill
  
  \begin{subfigure}[b]{0.5\textwidth}
    \centering
    \includegraphics[width=\textwidth]{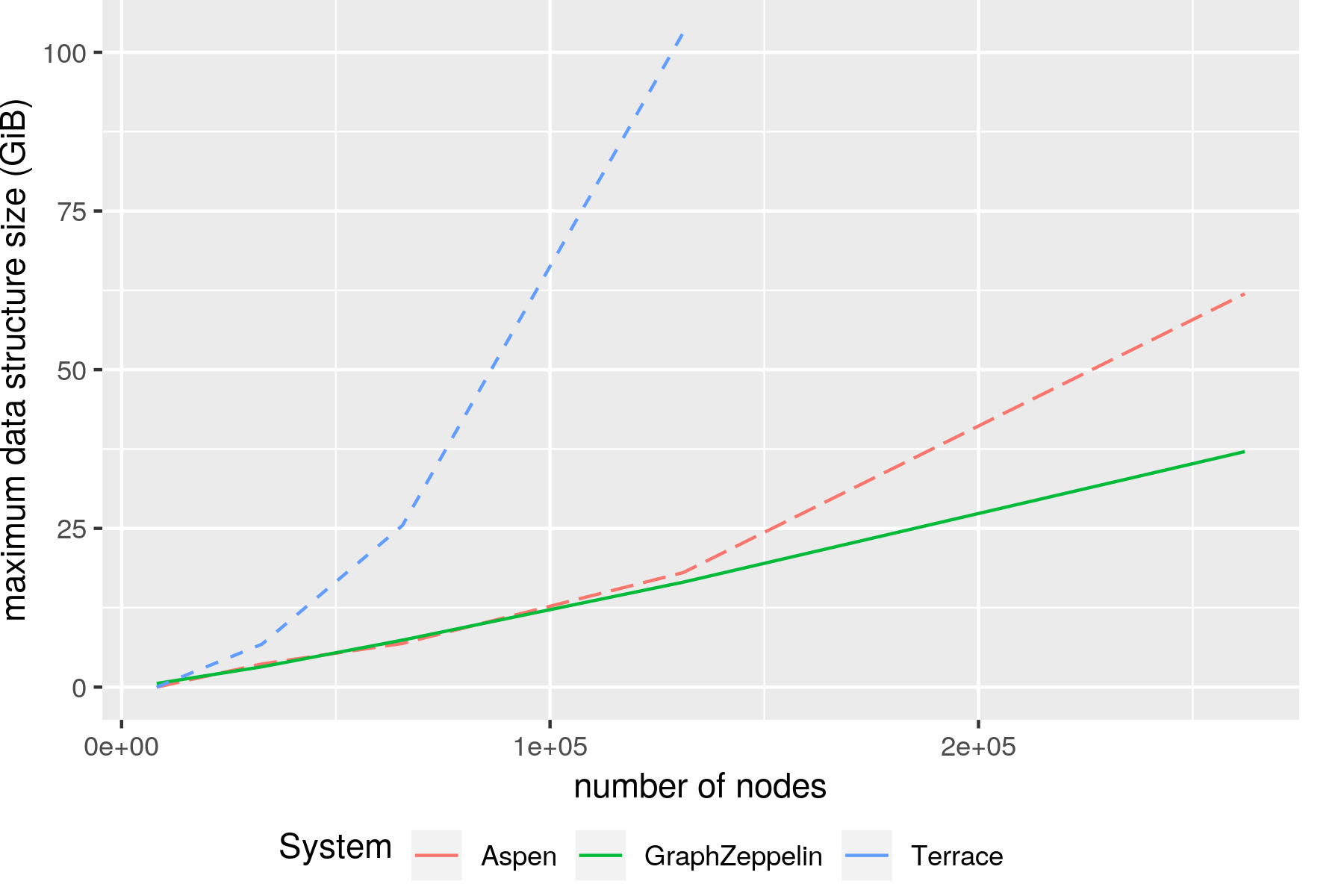}
\setlength{\belowcaptionskip}{-16pt} 
    \caption{
\sysname is asymptotically more memory efficient than either Aspen or Terrace on large, dense graphs. 
    }\label{fig:size_plot}
  \end{subfigure}
  \setlength{\belowcaptionskip}{-16pt} 
  \caption{\sysname uses less space than Aspen or Terrace to process large, dense graph streams.}
  \label{fig:size}
\end{figure}

\subparagraph{I/O Performance and Ingestion Rate.}
\label{sec:speed-comparison}
 Unlike Aspen and Terrace, \sysname maintains consistently high ingestion rates when its data structures are stored on SSD. \newtext{In Figure~\ref{fig:speed_disk} we summarize the results of running Aspen, Terrace, and \sysname with only 16GB of RAM.}  The ingestion rates of both Aspen and Terrace plummet once their data structures exceed 16GB in size and they are forced to store excess data on SSD. 
 Neither Aspen nor Terrace were able to finish their largest evaluated stream within 24 hours ($2^{17}$ for Terrace and $2^{18}$ for Aspen).
In comparison, \sysname's ingestion rate remains high when its memory consumption extends into secondary storage. \sysname's \treename{} finished the kron18 stream with an average ingestion rate of 2.50 million updates per second, a 29\% reduction to its performance compared to when its sketches are stored entirely in RAM.

\begin{figure}[!t]
\centering

\begin{subfigure}[b]{0.5\textwidth}
\centering
\begin{tabular}{ |c|c|c|c|c| } 
 \hline
 Dataset & Aspen & Terrace & Gutter Tree \textsc{GZ} & Leaf-Only \textsc{GZ} \\ 
 \hline
 kron13 & $4.98$ & $0.138$ & $3.93$ & $5.22$\\ 
 \hline
 kron15 & $3.54$ & $0.133$ & $3.77$ & $4.87$\\ 
 \hline
 kron16 & $2.54$ & $0.0143$ & $3.59$ & $4.51$\\ 
 \hline
 kron17 & $1.90$ & $0.0404$* & $3.26$ & $4.24$\\ 
 \hline
 kron18 & $0.0759$* & N/A & $2.50$ & $2.49$\\ 
 \hline
\end{tabular}
\caption{Ingestion rates in millions of updates per second.
Asterisks indicate the system did not finish within 24 hours.}\label{tab:speed}
\end{subfigure}

\hfill

\begin{subfigure}{0.5\textwidth}
  \centering
  \includegraphics[width=\textwidth]{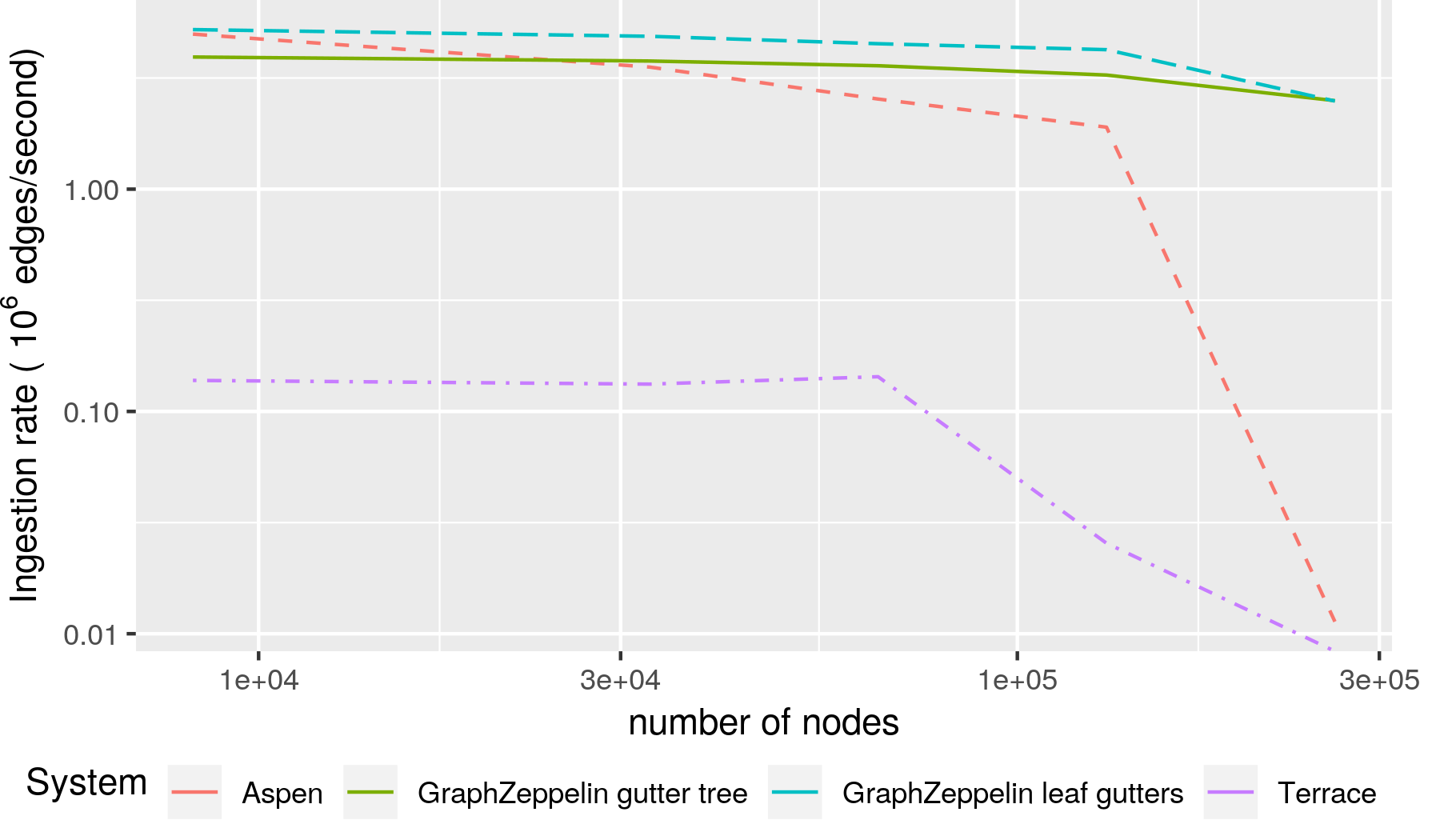}
\caption{Aspen and Terrace perform poorly on disk while \sysname's stream ingestion rate remains high.} \label{fig:speed_disk}
\end{subfigure}

\hfill

\begin{subfigure}[b]{0.5\textwidth}
\begin{tabular}{ |c|c|c|c|c| } 
 \hline
 Dataset & Aspen & Terrace & Gutter Tree & Leaf-Only Gutters \\ 
 \hline
 kron13 & $0.041$ & $0.126$ & $0.02$ & $0.02$\\ 
 \hline
 kron15 & $0.202$ & $0.800$ & $0.10$ & $0.10$\\ 
 \hline
 kron16 & $0.746$ & $1.260$ & $0.22$ & $0.19$\\ 
 \hline
 kron17 & $3.11$ & N/A & $0.44$ & $ 0.42$\\
 \hline
 kron18 & N/A & N/A & $97.5$ & $103$\\ 
 \hline
\end{tabular}
\caption{CC computation time after stream ingestion.
}
\label{tab:cc_time}
\end{subfigure}

\caption{\sysname remains fast even when its data structures are stored on disk, unlike Aspen and Terrace.}
\label{fig:IO_throughput}
\end{figure}

\begin{figure}[!t]
  \centering
  \includegraphics[width=.5\textwidth]{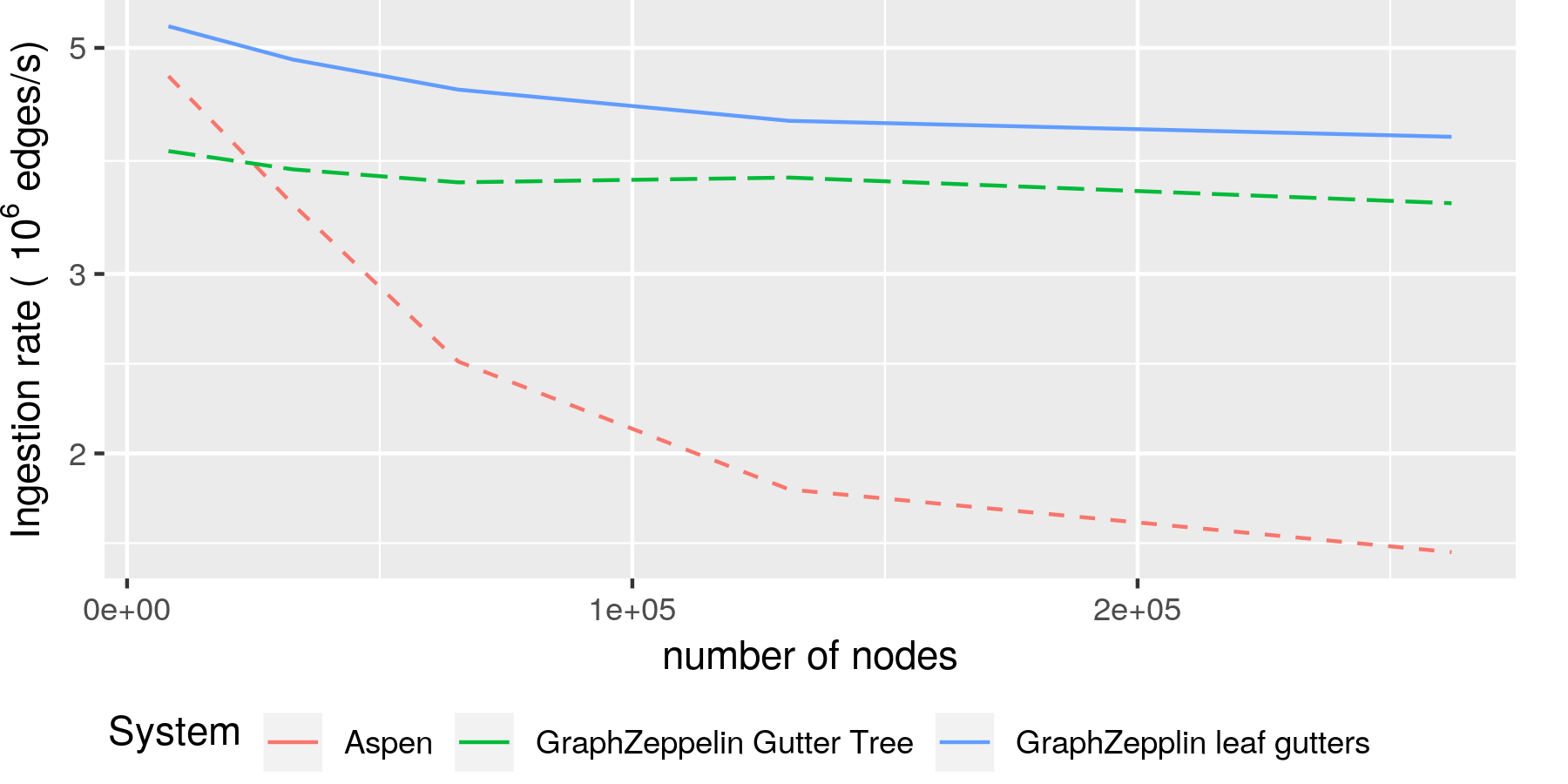}
\setlength{\abovecaptionskip}{-8pt} 
\setlength{\belowcaptionskip}{-12pt} 
\caption{\sysname is faster than Aspen and Terrace even when all data structures fit in RAM. 
} \label{fig:speed_mem}
\end{figure}


In RAM, \sysname's ingestion rate is higher than Aspen's on all Kronecker streams. We summarize these results in  \figref{speed_mem}.  Notably, on kron18 \sysname ingests 4.09 million updates per second, nearly three times faster than Aspen. \sysname ingests more than an order of magnitude faster than Terrace on these streams, so we omit it from the figure.

\subsection{\sysname is Reliable}
\sysname's sketching algorithm is not deterministically correct: it has a nonzero failure probability, which is guaranteed to be at most $1/\nodesize^c$ for some constant $c$. To establish that failures do not occur in practice, we compared \sysname with an in-memory adjacency matrix stored as a bit vector. Specifically, we applied stream updates to \sysname and the adjacency matrix and periodically queried \sysname and compared its results with the output of running Kruskal's algorithm on the adajacency matrix. We performed 1000 such correctness checks each on the kron17, p2p-gnutella, rec-amazon, google-plus, and web-uk streams. No failures were ever observed. While our algorithm's performance is optimized for dense graphs, this experiment demonstrates that it succeeds with high probability for both dense and sparse graphs.


\subsection{\sysname is Highly Parallel}
\label{sec:multi-thread-exp}
Due to the atomized nature of sketch updates, we expect stream ingestion to scale well on multi-core systems. We experimentally demonstrate this claim by varying the number of threads used for processing updates and observe a significant speed-up.

\begin{figure}
    \centering
    \includegraphics[width=.5\textwidth]{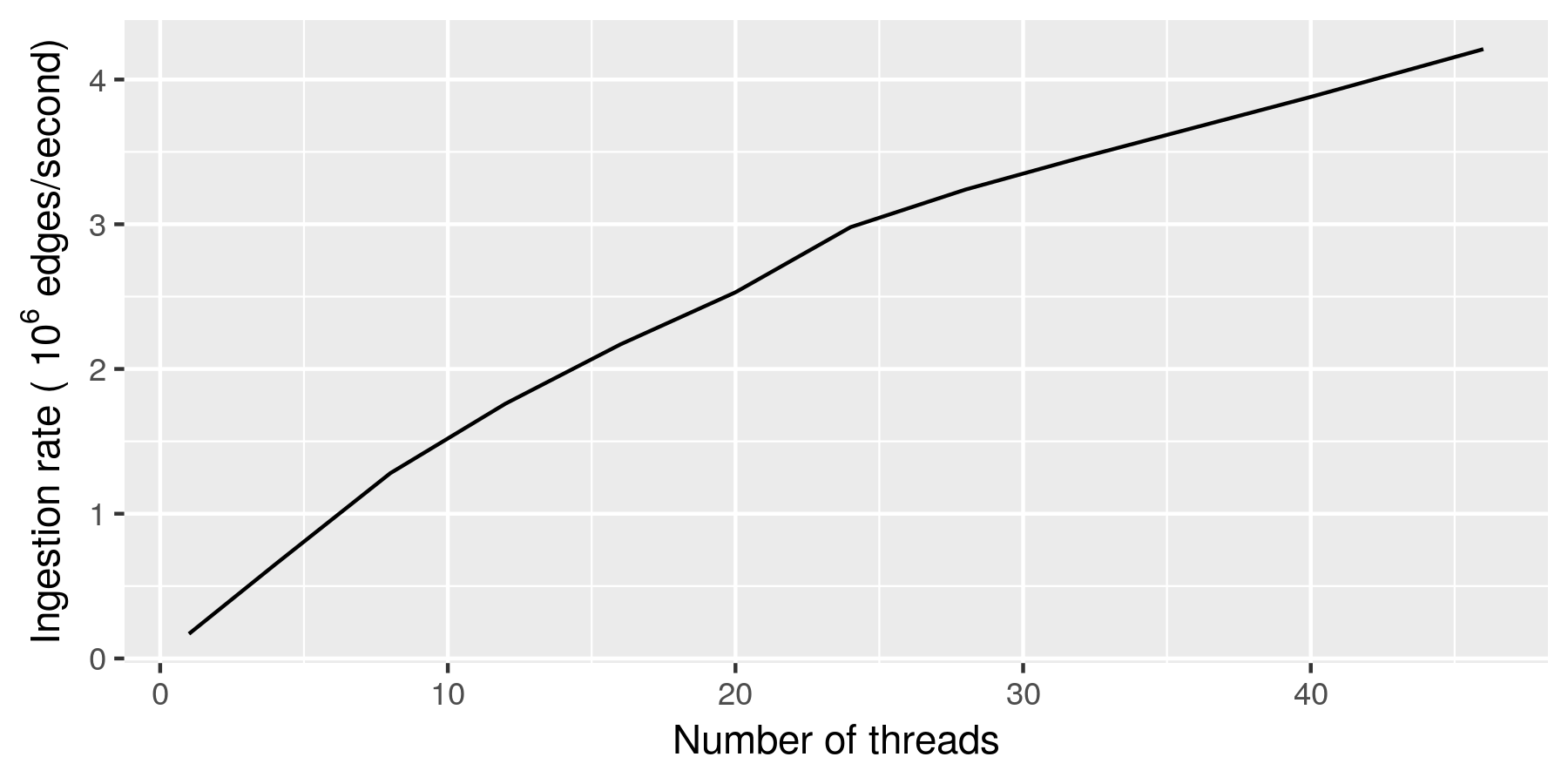}
    \caption{\sysname updates sketches in parallel, increasing ingestion rate by $26\times$ when using 46 threads.}
    \label{fig:parallel}
\end{figure}

Figure~\ref{fig:parallel} shows the ingestion rate of \sysname as the number of threads processing the kron17 graph stream increases. The threads are given a pool of $64\textsc{GB}$ RAM so that the parallel performance can be measured without memory contention. To avoid external memory accesses, we use leaf-only gutters for buffering. 
The per-thread increase in ingestion rate is significant; the ingestion rate for 46 threads is approximately 26 times higher than that of a single thread. 
Additionally, at 46 threads the marginal ingestion rate is still positive, suggesting that adding more threads would further increase performance.

We also experimentally determined that a group size of one gives the best performance with our combination of machine and inputs. 


\subsection{\sysname Buffering Facilitates Parallelism and I/O Efficiency}
\label{sec:buffer-size-exp}
\begin{figure}
    \centering
    \includegraphics[width=.5\textwidth]{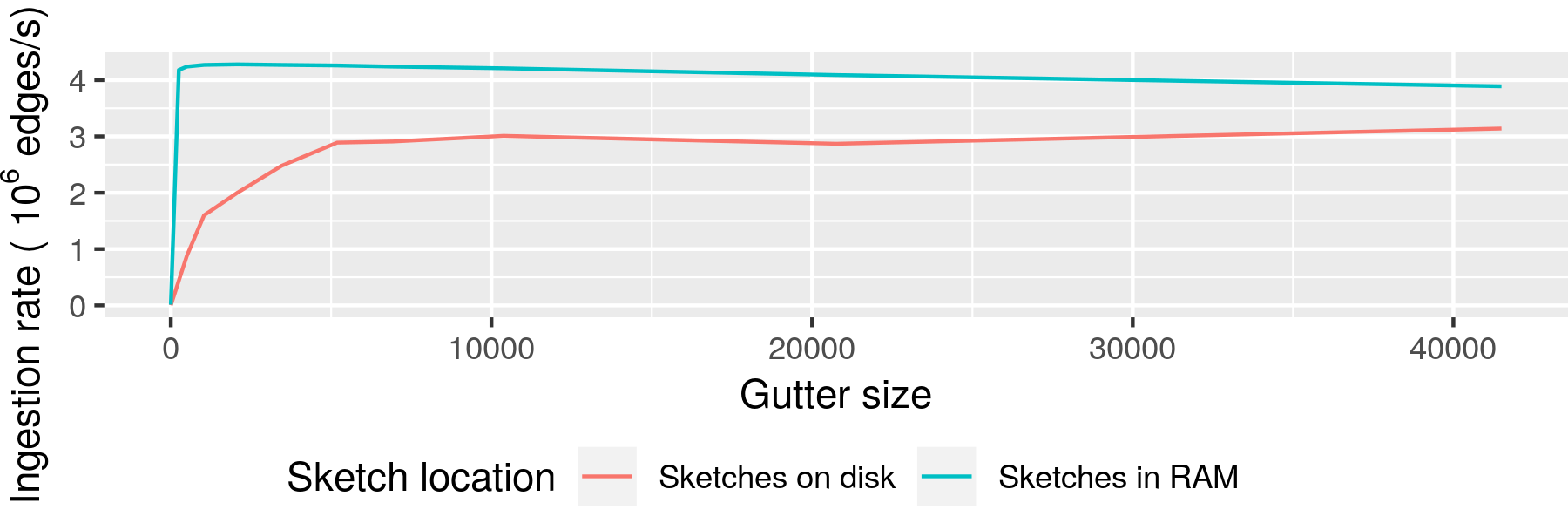}
    \setlength{\abovecaptionskip}{-10pt}
    \setlength{\belowcaptionskip}{-12pt} 
    \caption{\sysname gutter size vs ingestion speed. 
    }
    \label{fig:gutter_size}
\end{figure}
\vspace{-.5em}

Applying sketch updates is highly scaleable, but only if updates are buffered and applied in \batch{}es.  When sketches are stored on disk, processing each update individually requires $\Omega(1)$ IOs. Additionally, cache contention and thread synchronization bottleneck the ingestion rate even when sketches are in RAM. For these reasons we retain buffers of a constant factor $f$ of the node-sketch size.

Figure~\ref{fig:gutter_size} summarizes the ingestion rate of \sysname on the kron17 stream for different values of $f$ when the sketches are stored in RAM and when they are stored on disk. \sysname is given 46 Graph Workers and a group size of 1. With buffers of size 1 (no buffering), \sysname ingests 130,000 updates per second in RAM, 33 times slower than when $f = .10$. On SSD, the ingestion rate is only 2000 insertions per second, 3 orders of magnitude slower than peak on-disk performance.

When the sketches fit in RAM, performance increases rapidly indicating that $f$ can be quite small while providing a high ingestion rate. However, once memory requirements exceed main memory, $f$ must be larger to offset disk IOs. To achieve an ingestion rate within 5\% of peak performance on kron17, $f$ as small as $0.01$ is sufficient for entirely in RAM computation, while $f=.50$ is required when node sketches partially reside on disk.

\subsection{Connectivity Queries are Fast}

\begin{figure}[!t]
  \centering
  
\begin{subfigure}[b]{0.5\textwidth}
    \centering
    \includegraphics[width=\textwidth]{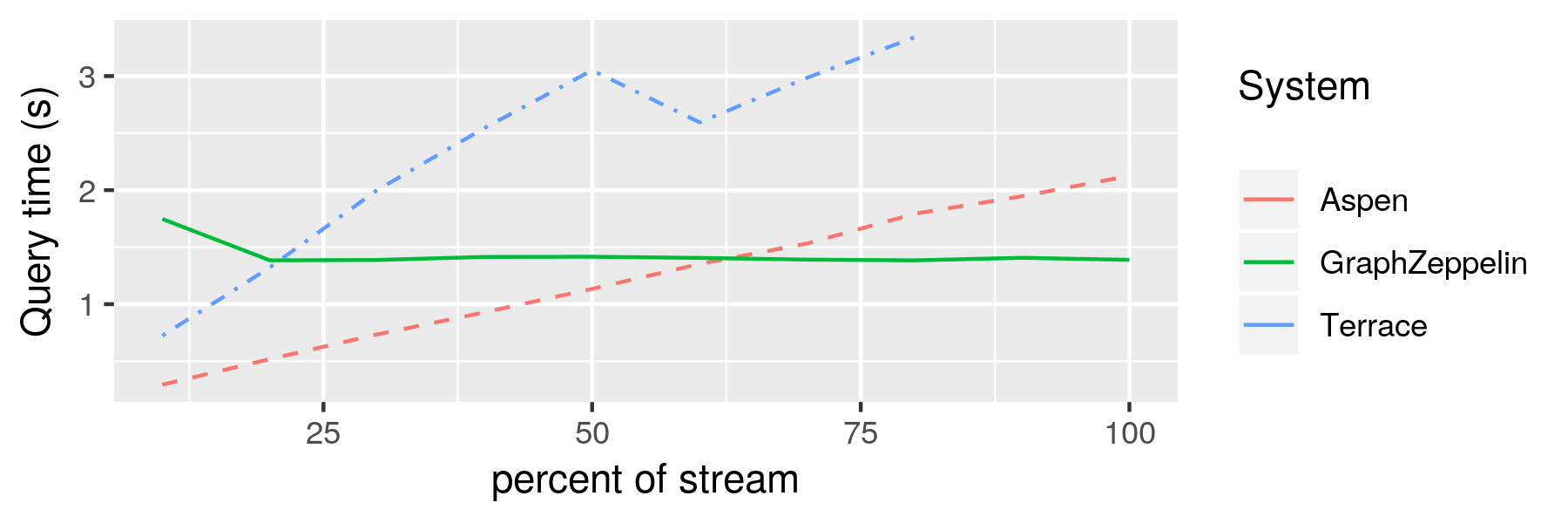}
    \caption{
In-memory query times. 
    }\label{fig:query1}
  \end{subfigure}

  \hfill
  
  \begin{subfigure}[b]{0.5\textwidth}
    \centering
    \includegraphics[width=\textwidth]{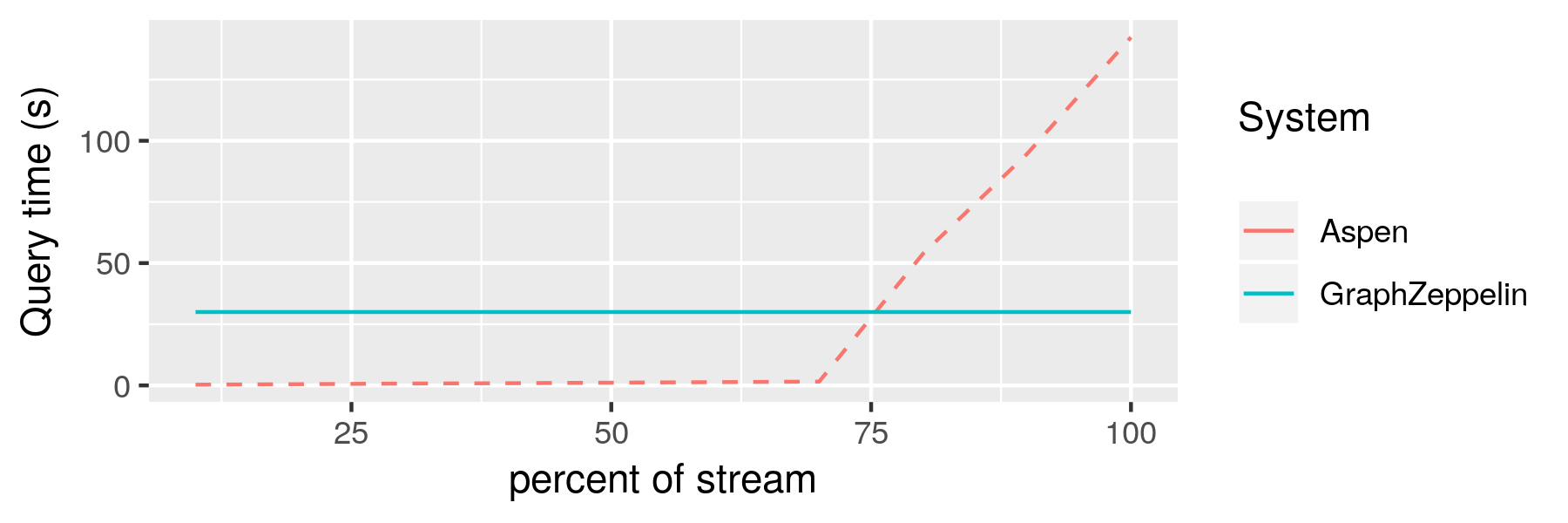}

\setlength{\belowcaptionskip}{-16pt} 
    \caption{
On-disk query times.
    }\label{fig:query2}
  \end{subfigure}
\setlength{\belowcaptionskip}{-16pt} 
  \caption{\sysname query performance is comparable to or better than Aspen and Terrace for dense graphs.}
  \label{fig:query}
\end{figure}
\newtext{
We show experimentally that \sysname gives comparable query performance to Aspen and Terrace on dense graphs when all systems' data structures fit in RAM. When their data structures reside on disk, \sysname answers queries more than five times faster than Aspen (and Terrace ingests too slowly to test).
}

\newtext{
\sysname's buffering strategies create a tradeoff between stream-ingestion rate and query latency.
When \sysname receives a connectivity query, it must process remaining stream updates in its buffering system before computing connectivity using Boruvka's algorithm. 
Large buffers improve stream-ingestion rate (see Section \ref{sec:buffer-size-exp}), particularly when sketches are stored on disk, but this comes at the cost of increased query latency since these large buffers must be emptied. For the same reasons, small buffers improve query latency but may decrease ingestion rate.
}

\newtext{
Figure \ref{fig:query1} compares the query latency of \sysname, Aspen, and Terrace on the kron17 stream where connectivity queries are issued every 10\% of the way through the stream. In this experiment \sysname used small 400-byte leaf-only buffers, enough space for 100 stream updates. At the beginning of the stream, when the graph is sparser, both Aspen and Terrace answer queries more quickly than \sysname. As the stream progresses and the graph becomes denser, \sysname's query time stays constant while Aspen's and Terrace's increase. By 70\% of the way through the stream \sysname is fastest.  Even with \sysname's small buffer size its ingestion rate was 3.95 million per second, twice as fast as Aspen and almost 100 times faster than Terrace.
}

\newtext{
Figure \ref{fig:query2} compares the query latency of \sysname and Aspen when RAM is limited to 12GiB, forcing both systems to store part of their data structures on disk. Terrace ingests too slowly given only 12GiB of RAM to be included in the experiment. In this experiment \sysname used 8.3 KB leaf-only buffers (one-tenth of sketch size). \sysname takes 24 seconds to perform queries regardless of graph density. Aspen's queries are fast until the graph is too dense to fit in RAM; its last query takes 142 seconds, five times slower than \sysname.  Notably, \sysname maintains an ingestion rate of 4.15 million updates per second, 46 times faster than Aspen. Both systems spend the majority of time on insertions, where \sysname's advantages come through. 
}

\vspace{-.5em}
\section{Related Work}


\paragraph{Graph Streaming Systems.}
Existing graph stream processing systems are designed primarily to handle updates in batches consisting entirely of insertions or entirely of deletions. Streaming systems that process updates in batches are generally divided into two categories. The first (which includes Terrace) consists of those systems which finish ingestion prior to beginning queries and finish queries prior to accepting any additional edges~\cite{terrace,ammar2018distributed,busato2018hornet,ediger2012stinger,murray2016incremental,sengupta2016graphin,sengupta2017evograph}. The second (which includes Aspen) allows updates to be applied asynchronously by periodically taking ``snapshots'' of the graph during ingestion to be used in conducting queries~\cite{aspen,cheng2012kineograph,iyer2015celliq,iyer2016time,macko2015llama}.

The batching employed in these systems amortizes the cost of applying updates, but also limits the granularity at which insertions and deletions may be interspersed during ingestion. In contrast, \sysname allows for insertions and deletions to be arbitrarily interspersed during ingestion without sacrificing query correctness. 

\paragraph{External Memory Systems.}
There is a rich literature of graph processing systems process static graphs in external memory. Some such systems store the entire graph out-of-core~\cite{KyrolaBlGu12,han2013turbograph,zhu2015gridgraph,maass2017mosaic,zhang2018wonderland}, and others are semi-external memory systems that maintain only the vertex-set in RAM~\cite{roy2013x,zheng2015flashgraph,yuan2016pathgraph,liu2017graphene,ai2018clip}. Some systems provide (at least theoretical) design extensions to handle queries on graphs with insert-only updates~\cite{KyrolaBlGu12,zhang2018wonderland,vora2016load,cheng2015venus,vora2019lumos}, but to the best of our knowledge \sysname is the first to leverage external-memory effectively in the streaming model of insertions \textit{and} deletions.


\section{Conclusion}
\sysname computes the connected components of graph streams using space asymptotically smaller than an explicit representation of the graph. It is based on \sketchname, a new $\ell_0$-sketching data structure that outperforms the state of the art on graph-streaming workloads.  
This new sketching technique allows \sysname to process larger, denser graphs than existing graph-streaming systems given a fixed RAM budget and to ingest these graph streams more quickly. 
Even when \sysname's sketch data structures are too large to fit in RAM, its work-buffering strategies allow it to process graph streams on SSD at the cost of a only small decrease in ingestion rate.
Thus, \sysname is simultaneously a space-optimal graph semi-streaming algorithm and an I/O-efficient external-memory algorithm.


The small space complexity of \sysname's linear sketch is optimized for large, dense graphs, unlike prior graph-processing systems, which often focus on sparse graphs. 
Thus, \sysname demonstrates that computational questions on graphs once thought intractably large and dense are now within reach. 

\newtext{
Currently large, dense graphs are studied rarely and at great cost on large high-performance clusters~\cite{facebookdense}. Since \sysname's sketches can be updated independently (Section~\ref{subsec:multithreading}), we believe that they can be partitioned throughout a distributed cluster without sacrificing stream ingestion rate. 
}

\sysname illustrates that additional algorithmic improvements  help make graph semi-steaming algorithms into a powerful engineering tool
by reducing the update-time complexity and allowing sketches to be stored efficiently on SSD.  These techniques may generalize to other graph-analytics problems.  




\subsection*{Acknowledgments}

\bibliographystyle{abbrv}
\bibliography{all}
\clearpage
\appendix
\end{document}